\documentclass[11pt]{article}
\usepackage{amsfonts}
\usepackage{amssymb,amsmath,amsthm}
\usepackage{graphicx}
\usepackage[dvips, paper=letterpaper, top=1in, bottom=1in, left=1in, right=1in, nohead, includefoot, footskip=0.4in]{geometry}
\usepackage[authoryear,round]{natbib}
\usepackage{environ}
\usepackage{color}
\usepackage{epsfig}
\usepackage{caption}
\usepackage{verbatim}
\usepackage{subcaption}
\usepackage{float}
\usepackage{setspace}
\usepackage{enumitem}
\setlist[enumerate]{itemsep=0mm}
\usepackage{calc}
\usepackage[pdfpagemode=UseOutlines,hyperfootnotes=false, plainpages=false]{hyperref}
\usepackage{mathtools}
\usepackage{dsfont}
\usepackage[english]{babel}
\usepackage{amsthm}
\usepackage{bm}
\usepackage{breqn}
\usepackage{bigints}
\usepackage{tikz}
\usetikzlibrary{positioning}
\usepackage{nicefrac}
\usepackage{xurl}
\usepackage[linesnumbered,ruled,vlined]{algorithm2e}
\usepackage{algpseudocode}
\usepackage{subfiles}
\usepackage[capitalise]{cleveref}
\usepackage[flushleft]{threeparttable}
\usepackage{array,booktabs,makecell}
\usepackage{subcaption}
\usepackage{booktabs}
\usepackage{pgfplots}
\pgfplotsset{compat=1.18}

\usepackage[font=normalsize,labelfont=bf]{caption}
\usepackage[font=footnotesize,labelfont=bf]{subcaption}
\usepackage{sectsty}
\allsectionsfont{\normalfont\sffamily\bfseries}
\sffamily

\newtheoremstyle{theoremsansserif} 
    {\topsep}                    
    {\topsep}                    
    {\itshape}                   
    {}                           
    {\sffamily\bfseries }        
    {.}                          
    {.5em}                       
    {}  

\theoremstyle{theoremsansserif}

\newtheorem{lemma}{Lemma}
\newtheorem{corollary}{Corollary}
\newtheorem{proposition}{Proposition}

\newtheorem*{example*}{Example}
\newtheorem{assumption}{Assumption}
\newtheorem{theorem}{Theorem}

\theoremstyle{definition}
\newtheorem{exmp}{Example}

\newcommand{\R}{\mathbb{R}}
\newcommand{\Z}{\mathbb{Z}}

\newcommand{\D}{\mathcal{D}}
\newcommand{\M}{\mathcal{M}}
\newcommand{\A}{\mathcal{A}}
\newcommand{\E}{\mathbb{E}}
\renewcommand{\L}{\mathcal L}

\newcommand{\F}{\mathcal F}

\newcommand{\pp}{\mathcal{P}}
\newcommand{\B}{\mathcal B}

\newcommand{\floor}[1]{\lfloor #1 \rfloor}
\DeclareMathOperator{\opt}{OPT}
\DeclareMathOperator{\argmin}{argmin}
\DeclareMathOperator{\argmax}{argmax}

\DeclareMathOperator{\alg}{ALG}



\title{Coordinating Spot and Contract Supply in Freight Marketplaces}
\author{
\sf Philip Kaminsky \\ \sf Amazon\\ \texttt{philipka@amazon.com} 
\and \sf Rachitesh Kumar\footnote{Work performed while author was a Postdoctoral Scientist at Amazon's Middle Mile Marketplace Science team.} \\ \sf Carnegie Mellon University \\ \texttt{rachitesh@cmu.edu}
\and \sf Roger Lederman \\ \sf Amazon \\ \texttt{rllederm@amazon.com}
}
\date{}

\begin{document}

\setstretch{1.5}

\maketitle

\begin{abstract}
    The freight industry is undergoing a digital revolution, with an ever-growing volume of transactions being facilitated by digital marketplaces. A core capability of these marketplaces is the fulfillment of demand for truckload movements (loads) by procuring the services of carriers who execute them. Notably, these services are procured both through long-term contracts, where carriers commit capacity to execute loads (e.g., contracted fleet of drivers or lane-level commitments), and through short-term spot marketplaces, where carriers can agree to move individual loads for the offered price. This naturally couples two canonical problems of the transportation industry: contract assignment and spot pricing. In this work, we model and analyze the problem of coordinating long-term contract supply and short-term spot supply to minimize total procurement costs. We develop a Dual Frank Wolfe algorithm to compute shadow prices which allow the spot pricing policy to account for the committed contract capacity. We show that our algorithm achieves small relative regret against the optimal---but intractable---dynamic programming benchmark when the size of the market is large. Importantly, our Dual Frank Wolfe algorithm is computationally efficient, modular, and only requires oracle access to spot-pricing protocols, making it ideal for large-scale markets. Finally, we evaluate our algorithm on semi-synthetic data from a major Digital Freight Marketplace, and find that it yields significant savings ($\approx$10\%) compared to a popular status-quo method.
\end{abstract}



\newpage

\section{Introduction}\label{sec:intro}

The truckload freight industry is the circulatory system of the economy, generating close to a trillion dollars of revenue in the United States alone\footnote{https://www.trucking.org/economics-and-industry-data}. At its core, it is a market where shippers who wish to move freight trade with carriers (drivers)\footnote{We use \emph{carrier} and \emph{driver} interchangeably to denote the operating party that executes loads (including owner-operators), though legally they may be distinct.} who serve this demand by executing the freight movements. However, unlike other contemporary markets that have embraced digitization, like ride hailing and retail, the freight market still relies heavily on manual systems involving phone calls and human operators. The inefficiencies inherent in manual operations, combined with the industry’s trillion-dollar scale, have set the stage for a digital revolution: \textbf{Digital Freight Marketplaces (DFMs)} have emerged to improve efficiency and deliver better outcomes for shippers and carriers, and ultimately for drivers and customers.

Every day, these DFMs facilitate the execution of thousands of truckload movements (called \emph{loads}) via carriers across a nationwide road network. Operationally, they procure carrier capacity through two distinct channels. On the one hand, they sign long-term contracts spanning multiple months under which carriers commit to move loads for the DFM---for example, these can be lane-level commitments, a contracted fleet, or in-house drivers\footnote{We use \emph{contracts} broadly to encompass any form of capacity commitment with payment guarantees, whether sourced through carrier contracts, private fleet arrangements, or employment.}; these contracts are typically signed weeks or months before execution begins. On the other hand, they run short-term spot load boards, where individual loads are posted days to hours before execution and dynamically priced until a carrier accepts. The DFM must coordinate these two channels to cover all required loads at minimum procurement cost---a task often complicated by uncertainty in the volume and composition of loads. In particular, given a supply of committed contracts that have already been signed, how should a DFM assign loads to contracted capacity and price them on the spot marketplace in order to minimize total cost?

In practice, the contract and spot sides of the supply are often optimized in relative isolation. Contract teams focus on utilization and on-time performance for committed capacity. Spot pricing teams focus on load-by-load decisions, tuning dynamic pricing policies to clear the load board while controlling short-term spend. Coordination is typically handled via simple rules of thumb: earmarking certain lanes or volumes for contracts, sending “overflow’’ to spot, or reserving contract capacity for loads deemed critical. These heuristics ignore the selection power of the spot marketplace---carriers choose which loads to accept---and the flexibility embedded in contracts, especially contracted fleet capacity that can be used to route drivers across multiple loads. As a result, DFMs routinely find themselves in costly suboptimal regimes where they pay high prices on loads that attract little interest from carriers in the spot marketplace and should have been assigned to contracts, while using contracts for loads that would have attracted strong interest in the spot marketplace..

This paper takes a first step toward a principled algorithmic approach to this coordination problem. We propose and analyze a general model of a Digital Freight Marketplace that has to execute a set of loads on a fixed future date. The DFM has access to a collection of committed contracts, each of which can cover certain subsets of loads (capturing both lane-level and contracted fleets), and to a spot load board where loads are posted and priced dynamically over time. On the spot side, we model pricing for each load as a Markov decision process (MDP) with a flexible state space and a terminal cost capturing the penalty for failing to procure a carrier for it on the load board. This MDP framework encompasses static posted pricing, sequential posted pricing, and richer dynamic pricing heuristics currently used in the industry~\citep{uber-blog}. On the contract side, the DFM can allocate any subset of loads to contracts subject to the combinatorial structure induced by feasible routes and lane commitments. The benchmark is the optimal dynamic program that jointly chooses all spot prices over time and all contract assignments to minimize total expected cost.

Directly solving this benchmark is hopeless: the state space of the dynamic program tracks which subset of loads remains unprocured, which is exponentially large in the number of loads $L$. A natural response is to simplify the interaction between contract and spot supply. One popular heuristic used in practice is the \emph{Load Bifurcation Algorithm} (LBA): before any spot pricing takes place, the DFM chooses a subset of loads to assign to contracts and commits to procuring the remaining loads on the spot marketplace. This choice is made by computing, for each load, its optimal expected spot cost (including the terminal cost associated with a failure to procure on the spot marketplace), and then solving an integer optimization problem to pick which loads to assign to contracts. LBA is interpretable and easy to implement, making it the status quo at many digital freight marketplaces. However, we show that it can be highly sub-optimal even in simple symmetric examples: by committing ex ante which loads will go to contracts, LBA aims at minimizing \emph{expected} cost through maximal contract utilization, but ignores the \emph{distribution} of spot cost and the stochastic fluctuations in the preferences of the carriers in the spot marketplace. The key issue is priority. LBA reserves contract capacity up front and then sends whatever is left over to spot. In contrast, in a well-coordinated system, the spot marketplace should get the first right of refusal on all loads: carriers should be allowed to pick off loads they want to move, and the remaining loads should be soaked up by contracts and, if necessary, an alternate channel. We propose such an algorithm: Dual Frank Wolfe (Algorithm~\ref{alg:dfw}), which uses shadow prices for loads and contracts to coordinate the two types of supply and minimize cost.

\subsection{Main Contributions}

\paragraph{Model of coordinated spot and contract procurement.}
We introduce a model of digital freight procurement that jointly captures long-term contract capacity and short-term spot marketplaces. On the contract side, we allow contracts to cover combinatorial subsets of loads, encompassing everything from simple lane-level volume commitments on a fixed origin-destination pair to contracted fleet arrangements where a driver can execute multiple loads during a shift. On the spot side, the pricing problem is modeled as a general MDP. This framework is rich enough to include static posted pricing, sequential posted pricing, and reinforcement-learning-based pricing policies popular in practice. The central optimization captures the joint cost minimization of the total procurement cost, which includes both contract assignment and spot pricing.

\paragraph{Dual Frank Wolfe algorithm.}
To solve the global cost-minimization problem, we propose the Dual Frank Wolfe algorithm, which applies the Frank Wolfe (Conditional Gradient) algorithm in the dual space. It interprets dual variables as shadow prices and uses them to coordinate the two types of supply. In particular, it maintains shadow prices for loads and iteratively updates them to simulate a bargaining procedure between the contract supply and the spot marketplace, resulting in shadow prices which accurately internalize contract capacity and spot pricing curves. These shadow prices are then used as terminal costs for the spot pricing policy, thereby accounting for contract capacity. Our algorithm is modular---it only requires oracle access to the pricing algorithm used by the DFM, and is computationally efficient---it only requires one to solve a simple linear program per iteration.

\paragraph{Performance guarantees.}
Under mild regularity assumptions on the pricing oracle (Lipschitz continuity of the non-procurement probability in the terminal cost) and a structural substitutability condition on the feasibility of contract assignments, we show that the Dual Frank Wolfe policy attains an expected regret (additive gap) of $O(\sqrt{L} + \gamma \cdot \log_2(L))$ against the optimal dynamic-programming benchmark, where $\gamma$ captures the degree of substitutability in contract assignment. To do so, we leverage: (i) a fluid approximation that relates the expected contract-assignment cost for a random set of residual loads to a deterministic problem with demand vector $q$; (ii) an approximate convexity result showing that the minimum number of contracts needed to cover a given demand profile is uniformly close to a convex function; (iii) the vanishing Frank Wolfe gap in the dual space to bound primal error. Importantly, our algorithm does not require knowledge of the substitutability parameter $\gamma$ and the number of Frank Wolfe iterations needed to reach a prescribed dual gap does not depend on the number of loads $L$. Moreover, it is  amenable to a parallelized implementation, thereby making it ideally suited for large-scale markets. Finally, we also show that DFW ensures (near) full utilization of the committed contract capacity with high probability, thereby avoiding the operationally untenable pitfall of wasteful contract use.


\paragraph{Applications and Numerics.} We instantiate Dual Frank Wolfe (DFW) across three contract families and benchmark its performance against the status-quo Load Bifurcation Algorithm (LBA):
\begin{itemize}
\item \emph{Lane-level Contracts.} We prove that these contracts are perfectly substitutable, i.e., $\gamma = 0$, and DFW consistently outperforms LBA in line with our theoretical results. In particular, our numerical experiments with realistic spot price-conversion curves show that DFW yields relative savings as high as $50\%$ for such contracts.

\item \emph{Bipartite Round-Trip Contracts.}
Contracted fleet use in long-range freight networks is primarily governed by the return-to-domicile requirements which capture the fact that drivers need to return to their home domicile at the end of their shift. We encoded this round-trip-complementarity structure via a bipartite graph and numerically evaluated DFW against LBA on an instance created by calibrating the bipartite graph and spot price-conversion curves on real-world data from a major DFM. We found that DFW yields substantial relative savings, ranging  from 3.5--9.6\% depending on the cost of the alternate channel. This translates to millions of dollars in savings at modern DFMs which routinely transact billions of dollars in load movements every year. Furthermore, we numerically analyzed the impact of substitutability as captured by the degree of the bipartite graph: our experiments with random Erd\H{o}s--R\'{e}nyi bipartite graphs show that DFW yields over $10\%$ relative savings even when the average degree is only 5, with the marginal gains plateauing near 16\% by degree 10.

\item \emph{Regional Contracts}. These contracts model regional fleet operations where contracted drivers execute multiple loads within one area. The contract assignment is driven by two practical constraints: loads occupy the driver over load-dependent time intervals and drivers typically stay within a single subregion for the shift. We show that these combinatorial constraints do not materially preclude substitutability by establishing $\gamma = O(\sqrt{L})$, thereby proving that DFW achieves sub-linear $O(\sqrt{L})$-regret in the total load volume $L$.
\end{itemize}
Our numerical results demonstrate that DFW yields substantial savings on realistic instances, with the extent of the savings being governed by the amount of substitutability that can be accommodated by the contracts. Remarkably, in all our experiments on realistic bipartite instances, Dual Frank Wolfe converged in very few iterations, never once exceeding 50 iterations in total.

\subsection{Related Work}\label{sec:related-work}

Following the deregulation of the trucking industry after the Motor Carrier Act of 1980, the procurement methods for truckload movements have progressively become more sophisticated. This has given rise to a plethora of research problems that have been tackled by researchers from a wide range of domains. We do not aim to provide a comprehensive survey of this literature here. Instead, we refer the reader to the encyclopedic survey by \citet{acocella2023research}, which reviews the literature on truckload procurement in its entirety and positions it into various streams: carrier vs shipper perspectives, in-house vs outsourced capacity, long-term strategic vs short-term execution decisions etc. Of particular interest to us are the following observations made in \citet{acocella2023research}: \emph{``In our review of the literature, we do not find any peer-reviewed papers that take the shipper's perspective on whether to use in-house or for-hire TL transportation services at the execution stage."} and \emph{``However, the decision by the shipper to either execute a standing contract or to utilize a transactional, spot market option at the load tendering stage has received attention from only a few researchers."}. We make progress towards filling these gaps by devising and analyzing a concrete algorithm to coordinate execution decisions across different execution channels---namely committed capacity (contracted or in-house) and spot marketplaces.

Digital freight marketplaces (DFMs) typically procure capacity on the spot market through a mix of posted pricing and auctions. \citet{scott2018carrier} and \citet{scott2019concurrent} empirically analyze spot auctions for truckload movements, analyzing the impact of carrier features and the presence of concurrent contract relationships respectively. \citet{chen2021posted} study a freight marketplace that serves as an intermediary between shippers and carriers in a truckload transportation network. They compare posted-pricing versus hybrid mechanisms that combine posted pricing and auctions via a fluid approximation, showing that hybridization can strictly improve average profit relative to pure posted pricing. \citet{cao2025dynamic} study a dynamic model of the spot marketplace where shippers and carrier arrive online, modeling the resulting two-sided marketplace with expiring jobs as a long-run average-reward MDP. Our work is complementary in scope: we treat the spot-procurement mechanism as a black box, allowing it to be based on dynamic posted-pricing or any other MDP-based mechanism. Instead, we focus on the missing coordination layer between spot procurement and committed contract capacity.

The primary focus of the literature has been on the design of combinatorial auctions/mechanisms for establishing contracts \citep{caplice2006combinatorial, caplice2007electronic, sheffi2004combinatorial}. In particular, substantial attention has been given to the combinatorial winner determination problem \citep{buer2010solving, sandholm2002algorithm, sandholm2005cabob, lim2008transportation, guo2006carrier}. Primarily, these combinatorial auctions result in lane-level contracts, which fit neatly into our framework---yielding perfect substitutability and near-optimal performance for our Dual Frank-Wolfe algorithm. However, lane-based contracts contribute to the notorious \emph{deadhead miles} problem because they leads to empty movements~\citep{heilmann2020information}. The recent work of \citet{candogan2025taming} takes a mechanism design approach to procurement and analyzes a mechanism where a freight brokerage offers carriers cyclic tours and compensates drivers for on-duty time. In our model, this corresponds to a contracted fleet of drivers who are paid for their time. Importantly, as with the spot pricing literature, our work is complementary because we do not assume a specific contract type and design a modular coordination algorithm that works for any contract type so long as the assignment constraints are well-specified---as is the case with both lane-level and time-based contracts.

A number of works have looked at the problem from the carrier's perspective, e.g., what price to quote to a shipper who desires to move their loads on the carrier's network or how to bid in a combinatorial auction. Since we assume that the load set is fixed by the time of execution---as is typically the case in practice, we do not analyze the procedure that generated that load set. We refer the reader to the survey \citet{acocella2023research} and works on this topic by \cite{figliozzi2005impacts, figliozzi2007pricing, topaloglu2007incorporating}. Methodologically, our approach belongs to a broad class of decomposition and shadow-pricing methods in large-scale operational systems, where dual variables translate global capacity constraints into local decision signals. In particular, our algorithm makes use of the Frank Wolfe (or Conditional Gradient) Algorithm in the dual space. We refer the reader to the comprehensive survey by \citet{braun2022conditional} for an overview of the rich literature on this algorithm, including primal-dual versions for applications like Support Vector Machines~\citep{lacoste-julien13, osokin16}. More generally, other first-order methods have also been used to coordinate systems~\citep{maggiar2025consensus} and distribute optimization~\citep{neal2011distributed}.

\section{Model}\label{sec:model}

\paragraph{Notation.} We use $\Z_+$ (and $\R_+$) to denote the set of non-negative integers (and real numbers), while $\Z_{++}$ ( and $\R_{++}$) denotes the set of strictly positive integers (and real numbers). For a probability distribution $P$, $X\sim P$ denote s a  random variable distributed according to $P$. $[n] = \{1, \dots, n\}$ denotes the set of positive integers less than or equal to $n$.

Consider a Digital Freight Marketplace (DFM) which procures supply through both spot and contract markets. Let $\L$, with $L = |\L|$, denote the set of loads it must execute on a fixed date in the future, called the \emph{execution date}. A load here refers to a full-truckload movement with all the features necessary for its execution specified---like origin, destination, trailer type etc. We assume that the DFM must execute all loads $\L$ on the execution date. In practice, this is often the case because shippers---and as a consequence the DFM---are bound by delivery-date promises to customers which guarantee delivery by a specified date. In order to fulfill these promises, DFMs often use alternate channels that guarantee execution, e.g., human operators manually calling carriers. We use $a_\ell > 0$ to denote the cost associated with the use of the alternate channel to execute load $\ell$. Alternatively, in the absence of execution guarantees, $a_\ell$ can capture the cost imposed by the loss of good will caused by a missed load execution.

Contracts in the freight industry are long-term commitments between carriers and shippers, wherein a carrier commits to supply a certain capacity for load movements at a specified frequency (say daily), and the shipper commits to pay for that capacity. Given the long-term nature of these contracts (often months or years), these commitments are typically made far in advance of the execution date. Consequently, we assume that the DFM has already signed contracts with carriers and can use that capacity to execute loads on the execution date of interest. In particular, we assume that the DFM has $B$ contracts available to execute loads in $\L$. Each contract can be used to execute many different loads or collections of loads, and the DFM must decide the assignment of loads to contracts.  We use $\A \subseteq 2^\L$ to denote the feasible subsets of loads which can be executed via contracts, i.e., a subset $A \subseteq \L$ of loads can be assigned to a contract if and only if $A \in \A$. Naturally, these sets must be downward closed, i.e., if $\tilde A \subset A$, then $\tilde A \in \A$. This is because carriers can be routed with unused capacity (i.e., they can drive segments without any cargo), and thus any subset of loads from a feasible assignment can also be executed within the same contract. Define $u = \max_{A\in \A} |A|$, i.e., at most $u$ loads can be assigned to any single contract. In order to ensure that the coordination problem remains non-trivial, we assume that (i) every load $\ell \in \L$ can be assigned to a contract as part of some $A \in \A$~\footnote{If a load cannot be assigned to a contract then we would have no choice besides procuring it via the spot marketplace, and can thus safely ignore such a load for the purposes of the coordination algorithm.}; (ii) there exists $\nu \in (0,1/2)$ such that $B > \nu \cdot L$ and $u \cdot B < (1 - \nu) \cdot L$, i.e., there are sufficiently many contracts to execute a significant fraction of the loads, but not enough to execute all of them. 

Let $C: \{0,1\}^{|\L|} \to \R_+$ denote the minimal cost incurred in fulfilling the subset of loads $X \in \{0,1\}^{|\L|}$ (represented by an indicator vector) with the available contracts $B$, defined as
\begin{align*}
    C(X) \quad \coloneqq \quad \min\quad  &\sum_{i \in \L} a_i \cdot z_i\\
    \text{s.t.}\quad 
    & z_i\ +\ \sum_{A: i \in A} x_{A} \geq X_i &&\forall\ i \in \L\\
    & \sum_{A\in \A} x_{A} \leq B\\
    &z_i, x_{A} \in \{0,1\}
\end{align*}
Here, $x_{A}$ denotes the decision variable which is 1 if and only if the collection of loads $A \in A$ is assigned to a contract, and $z_i$ denotes the decision to use the alternate channel for load $i$.

Broadly, there are two major families of contracts currently in use at Digital Freight Marketplaces\footnote{\url{https://relay.amazon.com/blog/trucking-work-amazon-relay}}:
\begin{itemize}
    \item \textbf{Lane-level Contracts.} Lanes refer to a pair of origin and destination regions, defined at some geographical granularity like zip code, city, domicile etc. Each lane-level contract is for loads on a particular lane. In such a contract, the carrier agrees to execute a specified number of loads---called the volume---every week. For example, the carrier might agree to execute 100 loads every week which originate in the New York Metropolitan area and end up in the Chicago Metropolitan area. To model such contracts in our framework, we let $\L$ be the set of all loads on a lane and let $B$ denote the total volume of all contracts on it. In this case, the compatibility set is the collection of all singleton sets $\A = \{ \{\ell\} \mid \ell \in \L\}$.

\item \textbf{Contracted Fleet.} In these agreements, drivers in the contracted fleet---whether contracted carriers, private fleet, or in-house employees---make their capacity available during recurring shifts (duty periods) each week. During a driver's shift, the DFM may assign multiple loads for the driver to execute in succession. Since drivers typically prefer to return home at the end of a shift, the primary routing constraint is that each driver's route must originate in (and return to) the contracted domicile area by the end of the shift. Additionally, standard operational constraints apply, such as maximum continuous driving limits and minimum rest requirements. Our framework captures contracted-fleet agreements by setting $\A$ to be the set of feasible routes---each comprising several loads---that satisfy the domicile and duty-period constraints specified by the contract. Many DFMs also offer dedicated/tour-style arrangements in which drivers provide recurring capacity and are compensated for their time, while the DFM assembles and assigns multi-load tours (e.g., \citealt{candogan2025taming}). In our model, these are the same object: a capacity commitment defined by a feasible set of routes rather than a fixed lane.

\end{itemize}

On the other hand, DFMs also procure supply via the spot marketplace, where carriers and shippers transact loads with a short lead time to execution. These markets operate via \emph{load boards} where shippers post the loads they would like to execute and the price they are willing to offer for them. These prices are updated over time till a carrier chooses to fulfil the load at the posted price. We model the dynamics of the spot marketplace using an Markov Decision Process (MDP), which is in line with practice~\citep{uber-blog} and prior work~\citep{cao2025dynamic, chen2021posted}. In particular, we associate an MDP $\M_\ell = (T_\ell, S_\ell, O_\ell, \{B_{\ell,t}\}, w_\ell)$ with each load $\ell \in \L$, where 
\begin{itemize}
    \item $T_\ell$ is the number of time periods for which load $\ell$ remains on the load board, i.e.,  the number of times the price can be updated between the time it is posted and when it must be executed.
    \item $S_\ell$ is the state space. This is a design choice made by the DFM according to its needs. It can range from the state space that simply captures whether or not the load has been accepted by a carrier, to the much richer space consisting of all historical information.
    \item $O_\ell$ denotes the set of possible prices that can be offered for load $\ell$ in any period.
    \item $B_{\ell,t}(p; s)$ denotes the probability with which load $\ell$ is accepted in period $t$ when the state is $s \in S_\ell$ and the price offered is $p \in O_\ell$. If the load is accepted, the cost is equal to the price $p$.
    \item $w_\ell$ denotes the terminal cost, which is the cost incurred if  load $\ell$ is not accepted by the end of period $T_\ell$.
\end{itemize}
We assume that the possible set of prices $O_\ell$ includes the price $0$, and the corresponding probability of acceptance $B_{\ell, t}(0;s)$ is always 0. Therefore, pricing is sufficient to prevent loads from being accepted on the load board, and we assume without loss that all loads $\L$ are posted on the load board. We use $\pp_\ell$ to denote the set of all implementable pricing policies for load $\ell$. We allow the pricing policies to be random, i.e., they can set random prices in each period. The set $\pp_\ell$ is determined by the structure of the MDP $\M_\ell$ and the algorithm used to optimize the pricing policy. Let $ S_\ell(P)$ be the expected cost incurred by the pricing policy $P \in \pp_\ell$ when $w_\ell = 0$, and $Q_\ell(P)$ be the corresponding probability of \emph{not procuring} a carrier for load $\ell$ by the end of period $T_\ell$. We assume that the Bernoulli distributions $\{Q_\ell(\cdot)\}_{\ell \in \L}$ corresponding to spot procurement are independent across loads; this is the case in practice when the market is large and carriers have independent heterogeneous preferences. For example, consider the following ways of modeling carrier preferences, which are popular in both theory and practice
\begin{itemize}
    \item \textbf{Static Pricing:} Due to operational reasons, DFMs often prefer to post a fixed take-it or leave-it offer price without making any changes to it. This prevents strategic behavior from carriers, like cancellation and re-booking of previously booked loads to obtain better prices. The static pricing problem can be formulated as a simple MDP with a single period, i.e., $T=1$, with acceptance probability $B_{\ell, 1}(p;s) = F_\ell(p)$, where $F_\ell$ is the distribution of the minimum price that carriers are willing to accept for load $\ell \in \L$.
    \item \textbf{Sequential Posted Pricing:} In richer models, the DFM can update prices over time as carriers arrive, leading to a sequential posted pricing problem with independent demand. In each period $t \in [T_\ell]$, the DFM posts a price $p_t$ for load $\ell$, and a carrier with an independently drawn private minimum-willingness-to-accept $v_t \sim F_{\ell, t}$ accepts the load if and only if $v_t \geq p_t$. This fits naturally into our MDP framework: the state $S_\ell$ can encode whether or not the load has been accepted (and possibly additional features such as market covariates), and the acceptance probability $B_{\ell,t}(p;s)$ is induced by the distribution of carrier values and the arrival process, as in the dynamic pricing models of \citet{cao2025dynamic}. Furthermore, we can restrict the class of implementable pricing policies $\pp_\ell$ to reflect real-world constraints like minimum prices, frequency of price change, monotonicity etc.
\end{itemize}

\textbf{Benchmark.} The DFM would like to make optimal use of its contracts and load board to minimize total procurement cost for $\L$. As contract assignments can be made close to the execution date, the DFM first posts all loads $\L$ on the load board, pricing them with policies $\{P_\ell\}$, and then assigns the unaccepted ones to contracts. In other words, it aims to solve the following optimization problem
\begin{align}\label{eq:benchmark}
    \opt\ \coloneqq \quad \min_{P_\ell \in \pp_\ell} \quad \sum_{\ell \in \L}\ S_\ell(P_\ell)\ +\ \E_{X_\ell \sim Q_\ell(P_\ell)}[C(X)]\,.
\end{align}

Our goal is to develop an algorithm which (approximately) solves \eqref{eq:benchmark}. Note that it is possible to solve \eqref{eq:benchmark} via a global dynamic program which both prices loads and assigns them to contracts. However, such a dynamic program is computationally intractable due to the curse of dimensionality: the contract assignment problem couples the pricing problems of all loads and the state space of the dynamic program includes the subset of unaccepted loads in each period, which is exponentially large and computationally intractable. This presents the need for computationally-efficient algorithms which can coordinate contract assignment and load board pricing. Moreover, for the algorithm to be practical, it must be modular, i.e., it should work for wide variety of pricing MDPs and policies. Digital Freight Marketplaces are constantly innovating on the pricing strategies they use. In this rapidly changing environment, modularity ensures that the algorithm does not overfit to a particular pricing model, allowing it to endure changes and remain effective at the core task of coordinating spot and contract supply.

\section{Warm-up: Load Bifurcation Algorithm}\label{sec:naive-alg}

Before describing our algorithm, it is useful to analyze a popular heuristic which is often the status quo in real-world digital freight markets because it appears to naturally address the coordination problem at first glance. This heuristic bifurcates the set of loads: it picks a subset of loads $S \subset \L$ to assign to contracts and procures the remainder of the supply on the spot marketplace; we will refer to it as the \emph{Load Bifurcation Algorithm} (LBA). The choice of $S$ is made to minimize the expected cost of procuring the remainder on the spot marketplace. Concretely, it first computes $\alpha_\ell \coloneqq \min_{P_\ell \in \pp_\ell}\ S_\ell(P_\ell) + a_\ell \cdot Q_\ell(P_\ell)$, which is the minimal expected cost associated with procuring load $\ell$ on the spot marketplace when the terminal cost is given by $w_\ell = a_\ell$, i.e., it is the cost associated with procuring a carrier on the spot marketplace when contracts are not available and the alternate channel must be used at a cost of $a_\ell$ if the carrier is not procured. LBA then selects the subset $S = \{\ell \in \L \mid z_\ell = 0\}$ to minimize expected cost by solving the following integer optimization problem:
\begin{align*}
    \min\quad  &\sum_{i \in \L}\ \alpha_i \cdot z_i\\
    \text{s.t.}\quad 
    & z_i\ +\ \sum_{A: i \in A} x_{A} \geq 1 &&\forall\ i \in \L\\
    & \sum_{A\in \A} x_{A} \leq B\\
    &z_i, x_{A} \in \{0,1\}
\end{align*}

LBA enjoys the benefits of modularity and interpretability: it simply assigns a subset of loads to contracts with the goal of minimizing the \emph{expected cost} of procuring the remaining loads via the spot marketplace and alternate channels. However, this approach can be highly sub-optimal, as the following example demonstrates.

\begin{exmp}\label{example:naive-bad}
    Suppose the DFM needs to execute $L = 1000$ loads. It has $B = 700$ contracts available, each of which can execute any one of the loads, i.e., $\A = \{\{\ell\} \mid \ell \in \L \}$. The DFM uses simple static pricing to procure carriers for loads on the spot marketplace and it is known that minimum price at which carriers would accept any load is uniformly distributed on $[100, 200]$ for every load, i.e., if the DFM posts a price of $p_\ell \in [100, 200]$, the probability of it being accepted by a carrier is $(p_\ell-100)/100$. Moreover, suppose the cost of the alternate channel is 300. In this case, we have $\alpha_\ell = \min_{p_\ell \in [100,200]}\ (p_\ell-100)\cdot p_\ell/100 + 300 \cdot (200 - p_\ell)/100 = 200$. 
    
    Since the loads are completely symmetric, LBA picks any $700$ and assigns them to contracts and procures the remainder via the spot marketplace at the cost of $200\cdot 300 = 60,000$. On the other hand, consider the following simple alternate algorithm: post a price of $p_\ell = 130$ for every load. Thus, each load on the spot marketplace is accepted with probability $0.3$. In other words, $300$ loads are accepted on average in the spot marketplace. The remaining 700 unaccepted loads are then assigned to a contract. Therefore, the total cost is $\approx 300\cdot 130 = 39,000$, which is much better than LBA's cost of $60,000$. This number is approximate because we have ignored the random noise: sometimes there will be more than 700 loads which are unaccepted and we will incur a cost of 300 for each of them. Rigorously accounting for this yields an exact expected cost that is bounded above by $41,000$, which is still substantially better than $60,000$. $\qed$
\end{exmp}

Before proceeding to our results, it is worth delving deeper into the phenomenon highlighted by Example~\ref{example:naive-bad}.
Note that LBA prioritizes contract assignment and aims to fully-utilize the contract capacity, i.e., it commits to selecting 700 loads for contracts. At first glance, this appears to be a worthy goal. After all, the DFM has paid for the contracted capacity and maximizing its utilization should result in lower costs. However, 100\% contract utilization comes at a cost: setting aside $B$ loads for contracts means that LBA is only able to account for the \emph{average} cost of procuring loads on the spot marketplace and ignores its \emph{distribution}. In particular, it only accounts for the preferences of the carriers on the spot marketplace at an aggregate level, thereby ignoring variance and idiosyncrasies, e.g., a carrier might happen to prefer a load because it fills an empty backhaul and is willing to execute it for a low price. 

The alternate approach on the other hand posts \emph{all loads} on the spot marketplace without bifurcating, and lets the carriers choose the loads they like---as defined by their willingness to execute at a lower price. The remainder are then assigned to contracts. This priority order naturally reflects the pricing ability possessed by the two types of supply: contracted carriers are committed to prices, whereas the spot carriers have the power to choose their price because they bear the risk of being left without any loads to execute. Therefore, LBA ends up paying a premium for ignoring this fact. In fact, this premium is even larger than the typical information rent incurred when pricing under limited information about the private value of the customer. In particular, even if we endow LBA with the ability to perfectly price loads it sends to the spot marketplace, i.e., every load it sends to the spot marketplace has a cost uniformly distributed between $[100,200]$, it still incurs a higher cost than the alternate algorithm which lacks this ability and is handicapped to simply post prices. To see this, note that the cost of LBA in this perfect pricing scenario is $300 \cdot 150 = 45,000$ because it procures 300 carriers on the spot marketplace at an average cost of $150$ (mean of uniform on $[100, 200]$), which remains higher than $41,000$.

Example~\ref{example:naive-bad} illustrates that coordination is not just a cosmetic detail: using contracts intelligently can create large savings even in settings with completely symmetric loads and simple static pricing. The key lesson is that the DFM should not irrevocably bifurcate loads into ``contract" and ``spot" sets ex ante. Instead, it should treat contracts as a flexible backstop that is deployed after the carriers in the spot marketplace have been given the opportunity to exercise their preferences over loads. In the remainder of the paper, we develop a coordination algorithm that formalizes this idea. Our algorithm generalizes the alternate strategy in Example~\ref{example:naive-bad} to arbitrary pricing MDPs and contract structures, while preserving the modularity of LBA: it only interacts with the pricing and contract-assignment mechanisms through black-box oracles, but it uses them jointly rather than in isolation.
\section{Dual Frank Wolfe}\label{sec:alg}

In this section, we develop an efficient algorithm for coordinating contract assignment and load-board pricing that is provably close to optimal. The key idea is to work in the dual space and interpret the dual variables as shadow prices: each load $\ell$ has an associated shadow price $\lambda_\ell$ and the contract capacity has shadow price $\mu$. Once these are fixed, the spot-pricing problem decomposes completely across loads: for each load $\ell$, we invoke a pricing oracle with terminal cost $w_\ell = \lambda_\ell$ which returns the optimal pricing policy and its non-procurement probability $q_\ell(\lambda_\ell)$, yielding a residual demand profile for contracts. On the contract side, we respond to this residual demand by solving a simple dual LP to find new shadow prices $(\hat \lambda, \hat \mu)$ consistent with the available capacity. The Frank Wolfe step then acts as a bargaining procedure between the spot and contract sides---by gradually adjusting the shadow prices $(\lambda,\mu)$ towards $(\hat \lambda, \hat \mu)$---until they approximately agree. The resulting procedure is easy to implement and highly modular: it sits on top of existing pricing and contract-assignment systems, allowing them to operate with minimal interference.

\subsection{Preliminaries and Assumptions}

Before describing our algorithm, we establish the necessary prerequisites. First, to ensure modularity, we posit the existence of a pricing oracle which can find the optimal pricing policy within the set of implementable pricing policies. 

\begin{assumption}[Pricing Oracle]\label{assum:pricing-oracle}
    For every load $\ell \in \L$ and terminal cost $w_\ell$, an optimal implementable pricing policy $P^*_\ell$ can be computed efficiently:
    \begin{align*}
        P_\ell^*(w_\ell)\ \in \argmin_{P \in \pp_\ell}\ S_\ell(P)\ +\ Q_\ell(P) \cdot w_\ell\,. 
    \end{align*}
    Moreover, the optimal implementable policy satisfies $Q_\ell(P_\ell^*(0)) = 1$ whenever the terminal cost $w_\ell = 0$.
\end{assumption}

Assumption~\ref{assum:pricing-oracle} formalizes the idea that the DFM can ``plug in'' its existing pricing engine as a black box. For any load $\ell$ and any terminal cost $w_\ell$, the oracle returns an implementable pricing policy that is optimal within the class $\pp_\ell$ when the DFM correctly internalizes the penalty $w_\ell$ for failing to procure a carrier for the load. The objective $S_\ell(P) + Q_\ell(P)\,w_\ell$ is exactly the total expected procurement cost under policy $P$, combining the payments made to carriers on the spot marketplace and the expected terminal cost. In practice, many DFMs already maintain sophisticated pricing systems for their load boards~\citep{uber-blog}. In simple cases, such as static pricing or small-state MDPs, the oracle can be implemented by solving the dynamic program exactly. In more complex settings, $\pp_\ell$ may be a parametric policy class (for example, prices of the form $p_t = f_\theta(s_t)$), and the oracle may use approximate dynamic programming or reinforcement learning to optimize $\theta$. Typical implementations include approximate value iteration with function approximation, policy-gradient or actor--critic methods over a fixed policy class, or deep RL heuristics to search for a good pricing policy. Assumption~\ref{assum:pricing-oracle} does not require global optimality over all measurable policies; it only requires that, for each $\ell$ and $w_\ell$, the pricing engine can efficiently return a policy that is optimal within the implementable policy class $\pp_\ell$. In particular, Assumption~\ref{assum:pricing-oracle} is compatible with approximate DP/RL implementations; as long as the pricing system optimizes over the same policy class $\pp_\ell$, our analysis goes through.


In order to simplify notation, we set $s_\ell(w_\ell) \coloneqq S_\ell(P^*_\ell(w_\ell))$ and $q_\ell(w_\ell) \coloneqq Q_\ell(P^*_\ell(w_\ell))$. The former represents the expected cost incurred by the policy $P_\ell^*(w_\ell)$ by the end of period $T$---disregarding the terminal cost, and the latter represents the probability of $P_\ell^*(w_\ell)$ not procuring the load by the end of period $T$. Furthermore, we set $r_\ell(w_\ell) \coloneqq s_\ell(w_\ell) + q_\ell(w_\ell) \cdot w_\ell$ to be the cumulative expected cost incurred by the optimal policy $P_\ell^*$ when the terminal cost is $w_\ell$.

\begin{assumption}[Lipschitz Conversion Rate]\label{assum:lipschitz}
    For every load $\ell \in \L$, there exists $\beta_\ell > 0$ such that $q_\ell(\cdot)$ is $\beta_\ell$-Lipschitz, i.e., $|q_\ell(w) - q_\ell(\tilde w)| \leq \beta_\ell \cdot |w - \tilde w|$ for all $w, \tilde w \in \R_+$.
\end{assumption}

Assumption~\ref{assum:lipschitz} requires that the non-procurement probability $q_\ell(w)$ of the pricing oracle is not overly sensitive to the terminal cost $w$. Intuitively, $w_\ell$ is a penalty for failing to procure a carrier for the load. Increasing $w_\ell$ encourages the pricing oracle to try harder to procure a carrier for the load on the spot marketplace, and so $q_\ell(w)$ should decrease with $w$. The Lipschitz condition says that this response is not arbitrarily steep; a unit change in the penalty cannot change the non-procurement probability by more than $\beta_\ell$. This is a common regularity condition in pricing and is satisfied by many systems in practice. For instance, in static pricing with a regular carrier-value distribution, the optimal price changes smoothly with the penalty, and the resulting non-procurement probability is Lipschitz in $w$. In the following lemma, we leverage Assumption~\ref{assum:lipschitz} to show that the expected cost of the pricing oracle $r_\ell(w)$ is $\beta_\ell$-Lipschitz. This follows from the Envelope Theorem: the derivative of $r_\ell(w)$ with respect to $w$ is exactly the non-procurement probability under the optimal pricing policy, that is, $r_\ell'(w) = q_\ell(w)$.

\begin{lemma}\label{lemma:smoothness}
    If \Cref{assum:lipschitz} holds, then $r_\ell(\cdot)$ is $\beta_\ell$-smooth.
\end{lemma}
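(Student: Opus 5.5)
We will show that $r_\ell$ is differentiable with derivative $r_\ell'(w)=q_\ell(w)$. Assumption~\ref{assum:lipschitz} then makes this derivative $\beta_\ell$-Lipschitz, which is exactly $\beta_\ell$-smoothness. The route is the standard envelope argument, carried out directly via concavity so that no differentiability of the policy map is needed.

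First, we write $r_\ell$ as a lower envelope of affine functions. By Assumption~\ref{assum:pricing-oracle},
\[
r_\ell(w)=\min_{P\in\pp_\ell}\bigl\{S_\ell(P)+Q_\ell(P)\,w\bigr\},
\]
and the minimum is attained at $P^*_\ell(w)$. Each map $w\mapsto S_\ell(P)+Q_\ell(P)w$ is affine in $w$, so $r_\ell$ is concave on $\R_+$.

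Second, we obtain two-sided bounds by comparing the optimal policies at two points. Fix $w,\tilde w\in\R_+$. Since $P^*_\ell(w)$ is feasible at $\tilde w$,
\[
r_\ell(\tilde w)\le s_\ell(w)+q_\ell(w)\tilde w = r_\ell(w)+q_\ell(w)(\tilde w-w).
\]
Symmetrically, using $P^*_\ell(\tilde w)$ at $w$ gives $r_\ell(w)\le r_\ell(\tilde w)+q_\ell(\tilde w)(w-\tilde w)$. Combining the two,
\[
q_\ell(\tilde w)(\tilde w-w)\;\le\; r_\ell(\tilde w)-r_\ell(w)\;\le\; q_\ell(w)(\tilde w-w).
\]
Taking $\tilde w=w+h$ and using the Lipschitz bound $|q_\ell(w+h)-q_\ell(w)|\le\beta_\ell|h|$, we get
\[
\bigl|r_\ell(w+h)-r_\ell(w)-q_\ell(w)h\bigr|\;\le\;|q_\ell(w+h)-q_\ell(w)|\,|h|\;\le\;\beta_\ell h^2.
\]
Hence $r_\ell$ is differentiable with $r_\ell'(w)=q_\ell(w)$, one-sided at $w=0$. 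This quadratic bound is already the descent-lemma form of smoothness. It also gives $|r_\ell'(w)-r_\ell'(\tilde w)|=|q_\ell(w)-q_\ell(\tilde w)|\le\beta_\ell|w-\tilde w|$, which is the gradient-Lipschitz form.

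The main subtlety is that $P^*_\ell(\cdot)$ might be a non-unique, discontinuous selection. The argument above sidesteps this, since it uses only optimality at each point together with the Lipschitz property of whichever selection defines $q_\ell$. A further point is whether ``$\beta_\ell$-smooth'' means the constant $\beta_\ell$ or $2\beta_\ell$ in the quadratic bound. Under the gradient-Lipschitz definition the constant is exactly $\beta_\ell$. The sharper quadratic bound $\tfrac{\beta_\ell}{2}h^2$ also follows by integrating $r_\ell'=q_\ell$, since a concave function is absolutely continuous.
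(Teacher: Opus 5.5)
Your proof is correct and follows the paper's route: write $r_\ell$ as a lower envelope of affine functions, identify $r_\ell'(w)=q_\ell(w)$, and transfer the Lipschitz bound of \Cref{assum:lipschitz} to the derivative. The paper just cites the Envelope Theorem for $r_\ell'=q_\ell$, whereas your two-sided comparison inequality proves it directly, and so also supplies the everywhere-differentiability of $r_\ell$ that the paper's citation leaves implicit (a minimum of affine functions can have kinks; here they are ruled out by the continuity of $q_\ell$).
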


Next, we describe the structural assumptions needed on the contract assignment problem. Define $\B(X) \coloneqq \min \{\sum_{A \in \A} x_A \mid \sum_{A: i \in A} x_{A} \geq X_i\ \forall\ i \in \L; x_A \in \Z_+\ \forall\ A \in \A\}$ to be the minimum number of contracts needed to completely cover the set of loads $X$. Note that this definition naturally applies to not just sets of loads $X \in \{0,1\}^L$, but also to multisets $X \in \Z_+^L$. In particular, it allows for copies of each load $\ell \in \L$ and treats copies as fungible for the purposes of contract assignment. Next, define $\B(d) = \E_{Z \sim d - \floor{d}}[\B(\floor{d} + Z)]$ to be the multi-linear extension of this set function for $X \in \R_+^{|\L|}$.  We abuse notation and use $X \sim d$ to denote the random vector $X = \floor{d} + Z$ where $Z \sim d - \floor{d}$.

\begin{assumption}[Substitutability in Contract Assignment]\label{assum:contract-subs}
    There exists $\gamma \in [0, B/4)$ such that
    \begin{align*}
        \B\left( \frac{X}{2} \right) \leq \frac{\B(X)}{2} + \gamma \qquad \forall\ X \in \R_+^L\,.
    \end{align*}
\end{assumption}

Assumption~\ref{assum:contract-subs} encodes a structural notion of substitutability in contract assignment. The inequality $\B(X/2) \le \B(X)/2 + \gamma$ says that if we scale a demand profile $X$ down by a factor of two, then we can cover this smaller set of loads using roughly half as many contracts, up to an additive slack $\gamma$. In operational terms, this means that contracts are not highly specialized to particular configurations of loads: capacity can be reassigned and repacked so that a large fraction of the original contracts remain useful when the underlying load pattern is sub-sampled uniformly at random. The parameter $\gamma$ measures the inherent indivisibilities and combinatorial frictions in the contract system; small $\gamma$ corresponds to a high degree of substitutability, where contracts can be flexibly shifted across different subsets of loads with only a modest loss in efficiency. Importantly, our Dual Frank Wolfe algorithm does not require knowledge of $\gamma$ and only the performance guarantee depends on it.

This assumption is consistent with many contract structures encountered in practice. In lane-level contracts, where each contract covers a single load on a lane, we have $\B(X) = \sum_\ell X_\ell$, so the inequality holds with $\gamma = 0$ (see \Cref{prop:lane-level-gamma}). In contracted fleets, each contract corresponds to a driver on duty who can execute several loads during their shift. Provided that (i) there are many overlapping feasible routes in the network, (ii) the maximum number of loads per contract $u$ is small relative to the total number of loads, and (iii) the load pattern is not adversarially aligned to a single route, the DFM can typically re-pack loads across contracts so that halving the demand approximately halves the number of contracts used, up to a minor adjustment. For example, these conditions are satisfied in regional transportation networks (see \Cref{prop:gamma-interval-graph}). Intuitively, Assumption~\ref{assum:contract-subs} rules out highly structured worst cases and focuses on the practically relevant regime in which contract capacity is used to cover a large and heterogeneous pool of loads, leading to many near-optimal ways to assign contracts. We provide specific examples of contracts and their $\gamma$ values in Section~\ref{sec:numerics}. Moreover, this assumption is also empirically supported by data from the digital freight market we study (see Section~\ref{sec:numerics}).

\begin{lemma}\label{lemma:approx-convexity}
    There exists a convex function $g: \R_+^L \to \R_+$ such that  $$g(d) \leq\ \B(d)\ \leq\ g(d) + 2 \cdot \gamma \cdot \log_2(4L)$$ for all $d \in \R_+^L$.
\end{lemma}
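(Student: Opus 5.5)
The plan is to take $g$ to be the \emph{lower convex envelope} of $\B$ on $\R_+^L$, i.e., the largest convex function lying below $\B$:
\[
g(d)=\inf\Big\{\textstyle\sum_j \theta_j \B(d_j)\ :\ \sum_j\theta_j d_j=d,\ \theta\in\Delta\Big\}.
\]
This immediately gives convexity, $g\ge 0$ and $g\le \B$. The whole content is the upper bound $\B(d)\le g(d)+2\gamma\log_2(4L)$. By Carath\'eodory's theorem applied to the epigraph in $\R^{L+1}$, I may restrict the infimum to convex combinations of at most $n\le L+2$ points. It therefore suffices to prove an approximate Jensen inequality with $n$ points,
\[
\B\Big(\textstyle\sum_j\theta_jd_j\Big)\le \sum_j\theta_j\B(d_j)+2\gamma\log_2(4L).
\]

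\textbf{Step 1 (approximate midpoint convexity).} For $X,Y\in\R_+^L$ I would show
\[
\B\big(\tfrac{X+Y}{2}\big)\le \tfrac12\B(X)+\tfrac12\B(Y)+\gamma .
\]
Assumption~\ref{assum:contract-subs} applied to $X+Y$ gives $\B((X+Y)/2)\le \B(X+Y)/2+\gamma$. So I need subadditivity $\B(X+Y)\le \B(X)+\B(Y)$. For integer multisets this is immediate, since the union of two covers is a cover. For the multilinear extension I would couple the roundings. Write $X\sim d$ and $Y\sim d'$ with independent rounding coins. Then $X+Y$ has the same integer part as $\floor{d+d'}$ plus a random part, and I need to compare $\E[\B(\floor{d+d'}+Z)]$ with $\E[\B(\floor d+Z_1)+\B(\floor{d'}+Z_2)]$. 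This uses the monotonicity of $\B$ together with the fact that $\B$ changes by at most $1$ when a single load is added. I would also record that $\B$ is nondecreasing and $1$-Lipschitz in $\ell_1$, both for integer inputs and, by averaging, for the multilinear extension. This Lipschitz property also controls the rounding error in Step 3.

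\textbf{Step 2 (dyadic Jensen).} Iterating the inequality of Step 1 down a balanced binary tree handles $2^k$ points with equal weights at additive cost $k\gamma$. More generally, it handles dyadic weights $\theta_j=m_j/2^k$ by repeating points, at cost $k\gamma$. To reach the claimed bound, I would use the standard trick from the theory of approximately midpoint-convex functions (a Hyers--Ulam/Ng--Nikodem-type argument). Split the $n$ points into two groups of nearly equal total weight, recurse, and combine the two group averages with one midpoint-type step. This keeps the depth at $\lceil\log_2 n\rceil+O(1)$ rather than the precision of the weights.

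\textbf{Step 3 (unequal weights and rounding).} Unequal merges $tX+(1-t)Y$ with $t\ne 1/2$ are the main obstacle. To handle them, I would first approximate the weights by dyadic rationals of precision $2^{-m}$ and absorb the perturbation of the point $\sum_j\theta_jd_j$ via the $1$-Lipschitz property. Then I would let $m\to\infty$, since the per-level cost $\gamma$ should not accumulate with $m$ if I always merge via the binary expansion of the weights. With $n\le L+2$, bookkeeping the constants should give the bound $2\gamma\log_2(4L)$.

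I expect this weight-balancing step to be the hardest, along with verifying subadditivity for the fractional multilinear extension in Step 1. If the balanced-merge argument fails to keep the depth logarithmic in $n$, my fallback is the known lemma that $\varepsilon$-midpoint-convex functions are $2\varepsilon\log_2 n$-Jensen-convex for $n$-term combinations.
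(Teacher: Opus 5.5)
Your plan is correct and follows the paper's route: approximate midpoint convexity from the substitutability assumption plus subadditivity of the multilinear extension, then the stability theory of approximately convex functions, which the paper cites as Theorems 8.9 and 8.3 of the Hyers et al.\ monograph and which your convex-envelope, Carath\'eodory and binary-expansion steps reconstruct by hand; moreover, by applying the assumption to $X+Y$ before using subadditivity you get midpoint constant $\gamma$ instead of the paper's $2\gamma$, which is what lets the elementary bookkeeping close at $2\gamma\lceil\log_2(L+2)\rceil\le 2\gamma\log_2(4L)$. When you write it out, prove subadditivity with a coupling that drives $X_i$ and $Y_i$ by one common uniform so that $X+Y$ has exactly the law of the rounding of $d+d'$ (with independent coins it does not, and the $1$-Lipschitz bound then only gives subadditivity up to an additive error that can be of order $L$); in Step 2, split the points by count rather than by weight, using at each merge the two-point bound $\B(tx+(1-t)y)\le t\B(x)+(1-t)\B(y)+2\gamma$, which follows from the binary expansion of $t$ (the $k$-th midpoint step costs $2^{-k}\gamma$, summing to at most $2\gamma$) together with continuity of $\B$.
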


Lemma~\ref{lemma:approx-convexity} states that $\B(d)$ is uniformly close to a convex function $g(d)$: for every demand profile $d$, the value $\B(d)$ lies between $g(d)$ and $g(d) + 2\gamma \log_2(4L)$. In other words, $\B$ is approximately convex up to an additive error that is controlled by the substitutability parameter $\gamma$. Lemma~\ref{lemma:approx-convexity} formalizes the fact that, in realistic freight networks with substantial flexibility in contract assignment, the effective number of contracts needed is very close to a convex function of the underlying demand profile. Its proof combines Assumption~\ref{assum:contract-subs} with an intricate coupling argument and results on the stability of functional equations in several variables~\citep{hyers2012stability}.


\subsection{The Fluid Problem and Its Dual}

Our algorithm operates in the dual space. We now define the fluid relaxation of the contract assignment problem and describe its dual problem. For a fixed $q \in [0,1]^L$, define the fluid relaxation of the planning problem as
\begin{align*}
    \F(q) \quad \coloneqq \quad \min_{p,d} \quad &a^\top p\\
    \text{s.t.} \quad &p + d \geq q\\
    &\B(d) \leq B\\
    &d,\ p\ \in \R_+^L
\end{align*}

Given a vector $q \in [0,1]^L$ of non-procurement probabilities, $\F(q)$ treats $q$ as a fractional demand profile and chooses how much of each load to cover by contracts ($d$) and how much to send to the alternate channel ($p$). It replaces random quantities with their expectations: the objective $a^\top p$ is the expected alternative cost, and $\B(d)$ is the total expected contract usage. Lemma~\ref{lemma:fluid-error} shows that this fluid relaxation approximates the underlying integer problem up to the $O(\sqrt{L})$ term which accounts for the error caused by ignoring the randomness.

\begin{lemma}\label{lemma:fluid-error}
    For every $q \in [0,1]^L$, we have $\E_{Y \sim q}[C(Y)]\ \leq \F(q) + A_\max \cdot \sqrt{L}/2\,.$
\end{lemma}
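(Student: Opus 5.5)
The plan is a coupling argument: take an optimal solution of the fluid problem $\F(q)$, turn it into a feasible solution of the integer problem defining $C(Y)$ for every realization $Y$, and pay for the randomness through a variance bound. Let $(p^*,d^*)$ be optimal for $\F(q)$, and write $A_\max = \max_\ell a_\ell$. First I reduce to $d^* \le q$ coordinatewise. Replacing $d^*$ by $\min(d^*,q)$ keeps $p^*+d^*\ge q$. It does not increase $\B$, since $\B$ is monotone: integer cover numbers are monotone and the multilinear extension inherits this under the natural coupling of Bernoullis. So I may take $0\le d^*_i\le q_i\le 1$ and $p^*_i \ge q_i-d^*_i$. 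Then $\B(d^*) = \E_{X\sim d^*}[\B(X)]$ with $X$ a vector of independent Bernoulli$(d^*_i)$ variables, and this expectation is at most $B$.

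Next I couple $Y\sim q$ and $X\sim d^*$ monotonically and independently across coordinates. Draw $U_i$ uniform on $[0,1]$ and set $X_i=\mathbf 1\{U_i\le d^*_i\}$ and $Y_i=\mathbf 1\{U_i\le q_i\}$. Then $X\le Y$ and $\Pr(Y_i=1,X_i=0)=q_i-d^*_i\le p^*_i$. For each realization I build a feasible solution of the program defining $C(Y)$:
\begin{itemize}
\item send every load of $Y-X$ to the alternate channel, at expected cost $\sum_i a_i(q_i-d^*_i)\le a^\top p^*=\F(q)$;
\item cover $X$ with an optimal cover using $\B(X)$ contracts;
\item if $\B(X)>B$, also send some loads of $X$ to the alternate channel so that the residual set is coverable with $B$ contracts.
\end{itemize}
This yields
$$C(Y)\ \le\ a^\top(Y-X) + A_\max\cdot h(X),$$
where $h(X)$ is the minimum number of loads that must be removed from $X$ so that the remainder has cover number at most $B$. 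Taking expectations, the lemma follows once I show $\E[h(X)]\le \sqrt L/2$.

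For this last step I use concentration of $\B(X)$. Adding or removing one load changes $\B(X)$ by at most one, so $\B$ has bounded differences with constant $1$. By Efron--Stein with independent Bernoulli coordinates,
$$\mathrm{Var}(\B(X))\le \sum_i d^*_i(1-d^*_i)\le L/4.$$
Combining this with $\E\B(X)\le B$ and Cauchy--Schwarz gives
$$\E[(\B(X)-B)^+]\le \E\bigl|\B(X)-\E\B(X)\bigr|\le \sqrt{L}/2.$$
This is exactly the constant in the statement, which suggests it is the intended route.

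\textbf{Main obstacle.} The remaining step is to compare $h(X)$ with $(\B(X)-B)^+$. The obvious argument drops whole contracts from an optimal cover, which uncovers up to $u$ loads per dropped contract and loses a factor $u$. I would first try to avoid this by removing loads one at a time: each removal lowers the cover number by at most one. The difficulty is that a single removal may not lower it at all, so this does not immediately give $h(X)\le(\B(X)-B)^+$. If that fails, I would apply Efron--Stein directly to $h$, or to the objective value $C(Y)$, which also has bounded differences ($1$ and $A_\max$ per coordinate, respectively). The aim would be to control its deviation from the fluid value and recover $A_\max\sqrt{L}/2$ without the factor $u$. I expect the precise handling of this overflow term to be the crux of the proof.
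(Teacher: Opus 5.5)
\textbf{There is a gap.} Up to the overflow term your argument is sound and follows the paper's proof. You use the same uniform coupling of $X\sim d^*$ with $Y\sim q$; yours is slightly cleaner, since it gives $Y\sim q$ exactly. You also use the same bounded-differences/Efron--Stein bound $\E[(\B(X)-B)^+]\le\sqrt{L}/2$. Your variance bound $\sum_i d^*_i(1-d^*_i)\le L/4$ is in fact the one that yields this constant. But the proposal stops exactly where the proof has to be finished.

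The missing point is what $A_\max$ means. The paper never defines it in the text, but its proofs use it as the largest alternate cost of one contract's worth of loads, $A_\max=\max_{A\in\A}\sum_{\ell\in A}a_\ell$. In this proof it bounds $\sum_A\bigl(\sum_{\ell\in A}a_\ell\bigr)(x^*_A-x_A)$. In the proof of Theorem~2 it bounds $\tilde\mu=\max_A\sum_{\ell\in A}\tilde\lambda_\ell$. With that meaning, the ``obvious argument'' you set aside is exactly the paper's argument. Take an optimal cover $x^*$ of $X$ with $\B(X)$ contracts, discard $(\B(X)-B)^+$ of its contracts, and send their loads to the alternate channel. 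Each discarded contract costs at most $A_\max$, so
\[
C(Y)\le a^\top(Y-X)+A_\max\,(\B(X)-B)^+ .
\]
Taking expectations finishes the proof; there is no factor $u$ to avoid.

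Under your reading $A_\max=\max_\ell a_\ell$, the argument only yields $u\cdot\max_\ell a_\ell\cdot\sqrt{L}/2$, and the remedies you list do not close the gap. The inequality $h(X)\le(\B(X)-B)^+$ is false. Let $\A$ consist of subsets of disjoint blocks of size $u$, and let $X$ be the union of $B+1$ full blocks. Then $\B(X)-B=1$, but a whole block must be emptied, so $h(X)=u$. Applying Efron--Stein or McDiarmid to $h$ or to $C(Y)$ only controls their fluctuations around their own means. What you need is an upper bound on the mean $\E[h(X)]$ itself, or on $\E[C(Y)]-\F(q)$, and concentration alone does not supply that.
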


The Lagrangian dual function of $\F(q)$ is given by
\begin{align}\label{eq:dual_def}
    \D(\lambda, \mu \mid q)\ \coloneqq \quad  \min_{p,d \in \R_+^L}\ (a - \lambda)^\top p + \left( \mu \cdot \B(d) - \lambda^\top d \right) + \lambda^\top q - \mu \cdot B\,. 
\end{align}

\begin{lemma}\label{lemma:dual-problem}
    For all $q \in [0,1]^L$, the dual optimization problem of $\F(q)$ is given by
    \begin{align*}
        \max_{\lambda, \mu \geq 0}\ \D(\lambda, \mu \mid q) \quad=\quad \max_{\lambda, \mu} \quad &\lambda^\top q - \mu \cdot B\\
        \text{s.t.} \quad &\sum_{\ell \in A} \lambda_\ell \leq \mu &&\forall\ A \in \A\\
        &\lambda_\ell \leq a_\ell &&\forall\ \ell \in \L\\
        &\lambda \in \R_+^L,\ \mu \geq 0\,.
    \end{align*}
\end{lemma}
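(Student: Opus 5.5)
The plan is to compute the Lagrangian dual function $\D(\lambda,\mu\mid q)$ in \eqref{eq:dual_def} in closed form. I will show that it equals $\lambda^\top q-\mu B$ when $(\lambda,\mu)$ satisfies the constraints of the stated LP, and $-\infty$ otherwise. Maximizing over $\lambda,\mu\ge 0$ then reduces to the LP. The inner minimization in \eqref{eq:dual_def} separates into a minimization over $p$ and a minimization over $d$, since the term $\lambda^\top q-\mu B$ does not depend on either, so I treat the two pieces separately.

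\textbf{The $p$-part.} Here $\min_{p\in\R_+^L}(a-\lambda)^\top p$ equals $0$ if $\lambda_\ell\le a_\ell$ for all $\ell$, attained at $p=0$. Otherwise it is $-\infty$, obtained by sending $p_\ell\to\infty$ on a coordinate with $\lambda_\ell>a_\ell$. This produces the constraints $\lambda_\ell\le a_\ell$.

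\textbf{The $d$-part.} Here I need $\min_{d\in\R_+^L}\mu\B(d)-\lambda^\top d=0$ exactly when $\sum_{\ell\in A}\lambda_\ell\le\mu$ for all $A\in\A$, and $-\infty$ otherwise. The value $0$ is always achievable at $d=0$, since $\B(0)=0$.

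For sufficiency, suppose the constraints hold and fix an integer multiset $X\in\Z_+^L$. Take an optimal cover $x\in\Z_+^{\A}$ with $\sum_{A\ni i}x_A\ge X_i$ and $\sum_A x_A=\B(X)$. Using $\lambda\ge0$,
\[
\lambda^\top X\le\sum_i\lambda_i\sum_{A\ni i}x_A=\sum_A x_A\sum_{\ell\in A}\lambda_\ell\le\mu\sum_A x_A=\mu\,\B(X).
\]
For fractional $d$, the multilinear extension is an expectation over the random integer vector $X\sim d$, and $\E[X]=d$. Taking expectations gives $\lambda^\top d\le\mu\B(d)$, so the minimum is exactly $0$.

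For necessity, suppose some $A\in\A$ has $\sum_{\ell\in A}\lambda_\ell>\mu$. Take $d=k\,\mathbf 1_A$ with $k\in\Z_+$. This is integral, and using $k$ copies of contract $A$ covers it, so $\B(d)\le k$. Then $\mu\B(d)-\lambda^\top d\le k\bigl(\mu-\sum_{\ell\in A}\lambda_\ell\bigr)\to-\infty$ as $k\to\infty$.

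Combining the two parts, $\D(\lambda,\mu\mid q)=\lambda^\top q-\mu B$ on the feasible set of the LP and $-\infty$ off it. Hence $\max_{\lambda,\mu\ge0}\D$ equals the stated LP; the nonnegativity constraints simply carry over. The only step needing care is the sufficiency direction for fractional $d$. It works because $\B$ on $\R_+^L$ is defined as an expectation of the integer covering function, so the integer inequality passes through linearity. This step uses $\lambda\ge0$, which is where the sign constraint on $\lambda$ enters.
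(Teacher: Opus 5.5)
Your proposal is correct and follows the paper's proof essentially step for step. It uses the same separation into the $p$-part (giving $\lambda \le a$) and the $d$-part, the same $d = k\mathbf{1}_A$ blow-up for necessity, and the same covering bound on integer multisets passed through the multilinear extension for sufficiency; you are only slightly more explicit than the paper in noting that $d=0$ attains the value $0$ and that $\lambda \ge 0$ is what makes over-covering harmless.
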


The dual function $\D(\lambda,\mu \mid q)$ captures Lagrange multipliers on the fluid constraints: $\lambda_\ell$ is a shadow price for not covering load $\ell$ (instead of paying $a_\ell$ on the alternate channel), and $\mu$ is a price for using an additional unit of contract capacity in expectation. Lemma~\ref{lemma:dual-problem} shows that the dual of $\F(q)$ has a clean and interpretable form. The constraint $\lambda_\ell \le a_\ell$ prevents the shadow price of a load from exceeding its fallback alternative cost, while the constraints $\sum_{\ell \in A} \lambda_\ell \le \mu$ for all $A \in \A$ say that the total shadow value of any feasible subset of loads cannot exceed the price $\mu$ of one contract.

Till now, we have focused only on the contract assignment problem. We now broaden our purview and also consider the spot pricing problem. In particular, replacing the contract assignment problem in \eqref{eq:benchmark} with the dual optimization problem yields
\begin{align*}
    \min_{P_\ell \in \pp_\ell}\ \sum_{\ell \in \L} S_\ell(P_\ell)\ +\ \max_{\lambda, \mu \geq 0}\ \D(\lambda, \mu \mid Q(P_\ell))\,.
\end{align*}

Exchanging the $\min$ and the $\max$, i.e., considering the dual problem, yields
\begin{align*}
    \max_{\lambda ,\mu \geq 0}\ \quad &\left\{\sum_{\ell \in \L} \min_{P_\ell \in \pp_\ell} \big( S_\ell(P_\ell) + \lambda_\ell \cdot Q_\ell(P_\ell)\big)\right\} - \mu\cdot B
\\
    \text{s.t.} \quad &\sum_{\ell \in A} \lambda_\ell \leq \mu &&\forall\ A \in \A\\
        &\lambda_\ell \leq a_\ell &&\forall\ \ell \in \L
\end{align*}

Finally, using the definition of $r_\ell(w_\ell) = \min_{P_\ell \in \pp_\ell} S_\ell(P_\ell) + Q(P_\ell) \cdot w_\ell$ yields
\begin{align}\label{eq:combined-dual-problem}
    \max_{\lambda, \mu \geq 0} \quad & \sum_{\ell\in \L} r_\ell(\lambda_\ell) - \mu \cdot B\\
        \text{s.t.} \quad &\sum_{\ell \in A} \lambda_\ell \leq \mu &&\forall\ A \in \A \notag\\
        &\lambda_\ell \leq a_\ell &&\forall\ \ell \in \L \notag\,.
\end{align}

Note that plugging the dual of the fluid contract problem into the benchmark \eqref{eq:benchmark} lets us decouple the spot-pricing decisions across loads, as evident in \eqref{eq:combined-dual-problem}. For fixed dual variables $(\lambda,\mu)$, each load $\ell$ faces a modified pricing problem in which failing to procure a carrier for the load incurs a penalty $\lambda_\ell$. The inner minimization $\min_{P_\ell \in \pp_\ell} \big( S_\ell(P_\ell) + \lambda_\ell Q_\ell(P_\ell)\big)$ is exactly $r_\ell(\lambda_\ell)$, the optimal expected cost for load $\ell$ when the terminal cost is $\lambda_\ell$. The combined dual problem \eqref{eq:combined-dual-problem} therefore chooses shadow prices $(\lambda,\mu)$ to balance two forces: on the one hand, increasing $\lambda_\ell$ encourages the pricing oracle to procure a carrier for load $\ell$ on the spot marketplace; on the other hand, the constraints $\sum_{\ell \in A} \lambda_\ell \le \mu$ and $\lambda_\ell \le a_\ell$ ensure that these penalties remain consistent with what can actually be achieved using the finite contract capacity $B$ and the alternate channel. Our Dual Frank Wolfe algorithm will operate directly on this dual problem, using the pricing oracle as a subroutine.



\subsection{Algorithm}

\begin{algorithm}[t] 
   \caption{Dual Frank Wolfe}
   \label{alg:dfw}
    \begin{algorithmic}\vspace{0.08cm}
            \item[\textbf{Initialize:}] $\lambda_\ell = 0$ for all $\ell \in \L$; $\mu = 0$; step size $\eta_t = 2/(t+2)$; tolerance $\epsilon > 0$; $g = \infty$
            \item[\textbf{While}] $g > \epsilon$:
            \begin{itemize}
                \item \textbf{Query oracle} to get $\{q_\ell(\lambda_\ell)\}_\ell$ for all loads $\ell \in \L$.
                \item \textbf{Solve Dual LP.}
                \begin{align}\label{LP}
                    (\hat \lambda, \hat \mu) \quad \in \quad \argmax\quad &\tilde\lambda^\top q(\lambda)\ -\ \tilde \mu \cdot B \notag\\
                    \text{s.t.}\quad & \sum_{\ell \in A} \tilde \lambda_\ell \leq \tilde \mu &&\forall\ A \in \A \notag\\
                    &\tilde \lambda \leq a \notag\\
                    &\tilde \lambda, \tilde \mu \geq 0 \notag
                \end{align}
                \item \textbf{Compute Frank Wolfe gap.}  $\quad g \longleftarrow \left(\hat\lambda^\top q\ -\ \hat \mu \cdot B \right) - \left(\lambda^\top q\ -\ \mu \cdot B\right)$\,.
                \item \textbf{Frank Wolfe Update.} If $g > \epsilon$: $$\lambda \longleftarrow (1 - \eta_t) \cdot \lambda + \eta_t \cdot \hat \lambda\,;\quad \mu \longleftarrow (1 - \eta_t) \cdot \mu + \eta_t \cdot \hat \mu\,.$$
            \end{itemize}
            \item[\textbf{Return:}] Dual variables $(\lambda, \mu)$; pricing policies $P_\ell^*(\lambda_\ell) = \argmin_{P_\ell \in \pp_\ell} S_\ell(P_\ell) + Q_\ell(P_\ell) \cdot \lambda_\ell$.
    \end{algorithmic}
\end{algorithm}

We are now ready to describe and motivate our algorithm. Algorithm~\ref{alg:dfw} applies the Frank Wolfe (Conditional Gradient) algorithm to the dual optimization problem~\eqref{eq:combined-dual-problem}. It can be viewed as a price-discovery process that coordinates the spot marketplace and the contract capacity. The dual variable $\lambda_\ell$ is a shadow price for failing to procure a carrier for load $\ell$ on the spot marketplace, and $\mu$ is a shadow price for consuming one unit of contract capacity. In each iteration, the algorithm runs a simple tâtonnement: given the current prices $(\lambda,\mu)$, the spot side responds by optimally pricing each load as if it faced a terminal penalty $w_\ell = \lambda_\ell$, and the pricing oracle returns the resulting non-procurement probabilities $q_\ell(\lambda_\ell)$. These $q_\ell$ represent the residual demand that the contract system must absorb at the current prices.

On the contract side, the algorithm then solves the dual LP to find a new pair $(\hat\lambda,\hat\mu)$ that are the shadow prices for covering the demand $q(\lambda)$ with contracts. In particular, note that the dual LP in Algorithm~\ref{alg:dfw} is exactly the dual of the fractional contract-assignment LP with load profile $q(\lambda)$:
\begin{align*}
    \min\quad  &\sum_{i \in \L} a_i \cdot z_i\\
    \text{s.t.}\quad 
    & z_i\ +\ \sum_{A: i \in A} x_{A} \geq q_i(\lambda_i) \qquad \forall\ i \in \L\\
    & \sum_{A\in \A} x_{A} \leq B\\
    &z_i, x_{A} \in \R_+\,.
\end{align*}
Thus, the Frank Wolfe update acts as a bargaining protocol over the shadow prices of the loads. The spot side takes the current shadow prices $\lambda$ and implicitly states that it is economical to clear only a fraction $1 - q_\ell(\lambda_\ell)$ of each load $\ell$ on the spot marketplace; the remainder $q(\lambda)$ must be absorbed by contracts. The contract side replies that, given this residual demand and the limited capacity $B$, the implied shadow prices should be $(\hat\lambda,\hat\mu)$. The Frank Wolfe step then interpolates between $(\lambda,\mu)$ and $(\hat\lambda,\hat\mu)$, gradually reconciling these two views. At convergence, the leftover demand $q(\lambda)$ and the shadow prices $\lambda$ are approximately consistent: the shadow prices are (approximately) optimal for the leftover demand.

\begin{theorem}\label{thm:runtime}
    Algorithm~\ref{alg:dfw} terminates after at most $\frac{\max_\ell \beta_\ell \cdot A_\max^2}{\epsilon}$ iterations of the \textbf{While} loop.
\end{theorem}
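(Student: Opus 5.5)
We will view Algorithm~\ref{alg:dfw} as the standard Frank Wolfe method applied to the concave maximization problem \eqref{eq:combined-dual-problem}, and run the textbook primal-gap / Frank Wolfe gap analysis. Write $\Phi(\lambda,\mu) \coloneqq \sum_{\ell} r_\ell(\lambda_\ell) - \mu B$ and let $K$ be its feasible region. The set $K$ is convex, and after replacing $\mu$ by a bounded surrogate it is compact, since any LP optimum has $\hat\mu = \max_{A}\sum_{\ell\in A}\hat\lambda_\ell$.

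Each $r_\ell$ is a pointwise minimum of affine functions of $w$, hence concave. By the Envelope Theorem (as in the discussion before Lemma~\ref{lemma:smoothness}), $r_\ell'(w)=q_\ell(w)$. So $\nabla\Phi(\lambda,\mu)=(q(\lambda),-B)$, and the dual LP in the algorithm is exactly the linear maximization oracle $\argmax_{v\in K}\langle\nabla\Phi(x),v\rangle$. The quantity $g$ computed in the algorithm is therefore the Frank Wolfe gap $\langle\nabla\Phi(x_t),v_t-x_t\rangle$, which upper bounds $\Phi^*-\Phi(x_t)$ by concavity.

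By Lemma~\ref{lemma:smoothness}, each $r_\ell$ is $\beta_\ell$-smooth, and $\Phi$ is linear in $\mu$. This gives the quadratic lower bound
\[
\Phi(x+\eta(v-x)) \ge \Phi(x) + \eta\, g(x) - \frac{\eta^2}{2}\sum_\ell \beta_\ell (v_\ell - \lambda_\ell)^2 .
\]
The curvature term is controlled using the box $0\le\lambda_\ell\le a_\ell\le A_\max$. The naive bound $\sum_\ell\beta_\ell A_\max^2$ carries a factor of $L$, which the theorem does not have. This is the step I expect to be the main obstacle. I would first try to obtain the $L$-free constant $\max_\ell\beta_\ell A_\max^2$ by exploiting the fact that the $q_\ell$ are probabilities with common slope structure. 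Failing that, I would track the curvature in a norm in which the diameter of $K$ is $A_\max$ per coordinate, and bound each coordinate's curvature by $\max_\ell \beta_\ell$.

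Write $C\coloneqq\max_\ell\beta_\ell A_\max^2$ for the effective curvature constant. With $h_t=\Phi^*-\Phi(x_t)$ and $\eta_t=2/(t+2)$, the recursion $h_{t+1}\le(1-\eta_t)h_t+\eta_t^2 C/2$ gives by induction $h_t\le 2C/(t+2)$. To conclude about the gap, use the inequality $h_{t+1}\le h_t-\eta_t g_t+\eta_t^2C/2$ directly: while $g_t>\epsilon$, each step decreases $h$ by at least $\eta_t\epsilon-\eta_t^2C/2$. Summing this over iterations $t\in[T/2,T]$ (the standard Jaggi-type argument) shows that $\min_t g_t\le O(C/T)$. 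Hence the loop must terminate once $T\ge C/\epsilon$, up to the absolute constant, which I would absorb into the definition of $C$. Initialization at $\lambda=0,\mu=0$ is feasible, and $q_\ell(0)=1$ by Assumption~\ref{assum:pricing-oracle}, so the first iteration is well defined.
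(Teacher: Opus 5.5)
There is a genuine gap, and it is exactly at the step you flagged. It cannot be closed, because with $\max_\ell\beta_\ell$ the statement is false. Your concavity, envelope-theorem, LP-as-linear-oracle and descent-inequality steps are all fine.

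Your fallback is to work in $\ell_\infty$, where the feasible set has diameter $A_{\max}$, and charge each coordinate curvature $\max_\ell\beta_\ell$. This is precisely the paper's own route: it uses the $\ell_\infty$ norm, $D=A_{\max}$, claims $\max_\ell\beta_\ell$-smoothness with gradients in $\ell_1$, and then cites a textbook Frank--Wolfe rate. That route rests on a wrong smoothness constant. With distances in $\|\cdot\|_\infty$, gradient differences are measured in $\|\cdot\|_1$. For the separable function $\sum_\ell r_\ell(\lambda_\ell)$,
\[ \|q(\lambda)-q(\lambda')\|_1=\sum_\ell|q_\ell(\lambda_\ell)-q_\ell(\lambda'_\ell)|, \]
and this equals $\sum_\ell\beta_\ell\|\lambda-\lambda'\|_\infty$ when each $q_\ell$ is linear with slope $-\beta_\ell$ and all coordinates move together. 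So the $\ell_\infty$-smoothness constant is $\sum_\ell\beta_\ell$, and the curvature constant is the $\sum_\ell\beta_\ell a_\ell^2$ you computed. No choice of norm removes the factor $L$. ``Common slope structure'' of probabilities does not help either, since the counterexample below uses identical linear $q_\ell$.

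\textbf{Counterexample.} Take $L\ge 4$ even, lane-level contracts, $B=L/2$, $a_\ell=2c$, and static pricing with reservation prices uniform on $[0,c]$. The oracle posts $w/2$, so $q_\ell(w)=1-w/(2c)$ on $[0,2c]$ and $q_\ell(0)=1$. Then $\beta_\ell=1/(2c)$ and $A_{\max}=2c$, so the claimed bound is $2c/\epsilon$.

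Write the current iterate as $\lambda=x_t\mathbf{1}$, $\mu=x_t$. Every dual LP has a unique symmetric solution $\hat\lambda=\hat\mu\mathbf{1}$, with $\hat\mu=2c$ if $x_t<c$ and $\hat\mu=0$ if $x_t>c$. With $\eta_t=2/(t+2)$, indexing from $t=0$, one gets
\[ x_t=c+c/t \ \text{for odd } t, \qquad x_t=c-c/(t+1) \ \text{for even } t. \]
The computed gap is
\[ g_t=(\hat\mu-x_t)\bigl(Lq(x_t)-B\bigr)=\frac{L}{2c}(\hat\mu-x_t)(c-x_t)>\frac{Lc}{2(t+1)}. \]
Hence the loop runs for more than $Lc/(2\epsilon)\ge 2c/\epsilon$ iterations.

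\textbf{What your argument does prove.} Run honestly with $C=\sum_\ell\beta_\ell a_\ell^2\le L\max_\ell\beta_\ell A_{\max}^2$, it gives termination within $O(C/\epsilon)$ iterations. The example shows this linear dependence on $L$ is unavoidable up to constants.

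Separately, you cannot absorb the absolute constant of the Frank--Wolfe gap bound into $C$ when the statement fixes that constant at one. The standard gap bound has a constant larger than one.
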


Theorem~\ref{thm:runtime} shows that the shadow price tâtonnement converges in a number of iterations that does not depend on the number of loads or the size of the pricing MDPs. The bound scales as $\beta \cdot A_{\max}^2 / \epsilon$, where $\beta = \max_\ell \beta_\ell$ summarizes how sensitive the non-procurement probabilities are to changes in the penalties, and $A_{\max}$ captures the alternative cost. Each iteration consists of one call to the pricing oracle for each load, which can be done fully in parallel across loads, and one solve of the dual LP over $(\lambda,\mu)$, which only depends on the contract structure and not on the internal details of the pricing MDPs. Thus, the outer algorithm scales gracefully with market size: the iteration count is independent of the number of loads and contracts, while the per-iteration work decomposes cleanly into a parallel pricing phase and a contract LP phase.

Next, we provide a bound the performance loss Algorithm~\ref{alg:dfw} as compared to the computationally-intractable optimal global DP solution $\opt$. Let $\alg$ denote the performance of the pricing policy computed by Algorithm~\ref{alg:dfw}:
\begin{align*}
    \alg\ \coloneqq\ \left\{\sum_{\ell \in \L} S_\ell(P_\ell^*(\lambda_\ell)) \right\}\ +\ \E_{X_\ell \sim Q_\ell(P_\ell^*(\lambda_\ell))}[C(X)]\ =\ \left\{\sum_{\ell \in \L} s_\ell(\lambda_\ell)\right\}\ +\ \E_{X_\ell \sim q_\ell(\lambda_\ell)}[C(X)]\,.
\end{align*}

\begin{theorem}\label{thm:performance-guarantee}
    For tolerance $\epsilon > 0$, Algorithm~\ref{alg:dfw} satisfies
    \begin{align*}
        \alg\ -\ \opt \leq A_\max \cdot \left(\frac{\sqrt{L}}{2}\ +\ 2 \cdot \gamma \cdot \log_2(4L) \right)\ +\ \epsilon\,.
    \end{align*}
\end{theorem}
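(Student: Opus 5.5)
We sandwich $\alg$ and $\opt$ around the value of the combined dual problem \eqref{eq:combined-dual-problem}, denoted $\Phi(\lambda,\mu) \coloneqq \sum_\ell r_\ell(\lambda_\ell) - \mu B$, maximized over the polytope $K$ of feasible $(\lambda,\mu)$. The argument has three steps: an upper bound on $\alg$ via the fluid problem, a lower bound on $\opt$ via weak duality with the convex surrogate $g$, and a link between the two through the Frank Wolfe gap.

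\emph{Upper bound on $\alg$.} Let $(\lambda,\mu)$ be the returned iterate, so the Frank Wolfe gap satisfies $g \le \epsilon$. Since $q(\lambda)$ is fixed, Lemma~\ref{lemma:fluid-error} gives
\[
\alg \le \sum_\ell s_\ell(\lambda_\ell) + \F(q(\lambda)) + A_\max\sqrt L/2 .
\]
We then need $\F(q(\lambda)) \le \max_{K}\bigl(\tilde\lambda^\top q(\lambda) - \tilde\mu B\bigr) + (\text{error})$. The right-hand side is the LP dual of the fractional set-cover relaxation. $\F$ uses $\B(d)$, which is the multilinear extension of an integer covering number and is not the LP relaxation, so there is a duality gap.

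Here we invoke Lemma~\ref{lemma:approx-convexity}. Define $\F_g$ exactly as $\F$ but with the constraint $g(d)\le B - 2\gamma\log_2(4L)$; any $d$ feasible for $\F_g$ satisfies $\B(d)\le B$, so $\F\le\F_g$. The problem $\F_g$ is convex, so strong duality applies. We then want to show its dual is at most the dual LP value with $B$ replaced by $B-2\gamma\log_2 4L$, which costs at most $\mu\cdot 2\gamma\log_2(4L)$. Since any feasible $\mu$ can be taken to be at most $A_\max$ (a singleton contract gives $\lambda_\ell \le \mu$, and $\lambda\le a$), this is at most $2A_\max\gamma\log_2(4L)$. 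One subtlety is relating $g$'s dual to the set-cover LP constraints; I would use $g\ge$ the LP covering value, i.e.\ show that the convex envelope of $\B$ dominates the fractional cover number, and adjust the construction of $g$ if needed.

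\emph{Linking $\alg$ to the dual via the Frank Wolfe gap.} With the above, the bound becomes
\[
\alg \le \sum_\ell s_\ell(\lambda_\ell) + \hat\lambda^\top q(\lambda) - \hat\mu B + \text{err}.
\]
The termination rule $g\le\epsilon$ gives $\hat\lambda^\top q - \hat\mu B \le \lambda^\top q - \mu B + \epsilon$, and therefore, using $r_\ell = s_\ell + \lambda_\ell q_\ell$,
\[
\alg \le \Phi(\lambda,\mu) + \epsilon + \text{err}.
\]
We should check that the gap computed in the algorithm is the Frank Wolfe gap for $\Phi$. It is, because $\nabla_\lambda\Phi = q(\lambda)$ by Lemma~\ref{lemma:smoothness} and the envelope theorem.

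\emph{Lower bound on $\opt$.} We need $\Phi(\lambda,\mu) \le \opt$ for every $(\lambda,\mu)\in K$. Fix any policies $P_\ell$ with random residual set $X$. For $(\lambda,\mu)\in K$, LP weak duality for the integer program $C(X)$ gives $C(X) \ge \lambda^\top X - \mu B$. Taking expectations yields $\E[C(X)] \ge \lambda^\top Q - \mu B$, so
\[
\sum_\ell S_\ell + \E[C(X)] \ge \sum_\ell \bigl(S_\ell + \lambda_\ell Q_\ell\bigr) - \mu B \ge \Phi(\lambda,\mu).
\]
Minimizing over policies gives $\opt \ge \Phi(\lambda,\mu)$. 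Combining the three steps yields the theorem.

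The main obstacle is the fluid duality step: converting $\F$, which involves the nonconvex $\B$, into the LP dual value at the cost of only $2A_\max\gamma\log_2(4L)$. This requires care with the capacity shift in $B$ (positivity follows from $\gamma<B/4$ only for moderate $L$) and with the bound $\mu\le A_\max$.
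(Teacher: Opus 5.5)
Your outline matches the paper's proof step for step. You pass to $\F(q)$ via \Cref{lemma:fluid-error}, replace $\B$ by the convex $g$ with reduced capacity $B'=B-2\gamma\log_2(4L)$ and invoke Slater, pay $\tilde\mu(B-B')$ for the capacity shift, apply the termination test, and use weak duality for $C(X)$ at the returned iterate (feasible as a convex combination of LP solutions starting from $(0,0)$). The gap is in the step you flag as the main obstacle, where the inequality you propose points the wrong way.

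To bound the Lagrangian dual of $\F_g$ \emph{above} by the LP dual with capacity $B'$, you need an upper bound on the inner minimum
\[
\min_{p,d\ge 0}\ (a-\tilde\lambda)^\top p+\tilde\mu\, g(d)-\tilde\lambda^\top d .
\]
Specifically, it must be $-\infty$ whenever $\tilde\lambda\not\le a$ or $\sum_{\ell\in A}\tilde\lambda_\ell>\tilde\mu$ for some $A\in\A$, and at most $0$ otherwise. That requires $g$ to be small. Showing $g\ge$ (fractional cover number) bounds this minimum from below, so it only yields $\F_g\ge$ LP value, which is useless here. The inequality you need is already the left half of \Cref{lemma:approx-convexity}. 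Since $g\le\B$, the inner minimum with $g$ is at most the one with $\B$. The computation in the proof of \Cref{lemma:dual-problem} shows the latter is $-\infty$ off the LP-dual polytope (let $p_\ell\to\infty$, or take $d=n\mathbf{1}_A$). On the polytope it is at most $0$ (take $p=d=0$, using $\B(0)=0$). This is exactly how the paper passes from $\F_g$ to the LP dual.

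If you want to keep your idea of adjusting $g$, take $g$ to be the fractional cover number itself. It is convex, lies below $\B$ by Jensen, and is the closed convex envelope of $\B$, since both have conjugate equal to the indicator of $\{\lambda:\sum_{\ell\in A}\lambda_\ell\le 1\ \forall A\in\A\}$. Hence the upper bound of \Cref{lemma:approx-convexity} transfers to it: on the open orthant directly, and everywhere by continuity. With this choice, $\F_g$ is literally the covering LP with capacity $B'$.

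Two smaller points. First, your justification of $\tilde\mu\le A_\max$ is backwards: the singleton constraints $\tilde\lambda_\ell\le\tilde\mu$ bound $\tilde\mu$ from below. The correct reason is that the objective decreases in $\tilde\mu$. So for $B'>0$ an optimal $\tilde\mu$ equals $\max_{A\in\A}\sum_{\ell\in A}\tilde\lambda_\ell\le\max_{A\in\A}\sum_{\ell\in A}a_\ell$, which is what $A_\max$ denotes (compare its use in the proof of \Cref{lemma:fluid-error}); it is not $\max_\ell a_\ell$.

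Second, your concern about the sign of $B'$ is legitimate, since the paper passes over it, but it is harmless. If $B'\le 0$, take $d=0$ to get $\F(q)\le a^\top q$. Then evaluate the LP dual at the feasible point $(a,A_\max)$ to get $a^\top q\le\bigl(\max_{(\tilde\lambda,\tilde\mu)\in K}\tilde\lambda^\top q-\tilde\mu B\bigr)+A_\max B$. The last term is at most $2A_\max\gamma\log_2(4L)$, which is the bound you need.
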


Theorem~\ref{thm:performance-guarantee} shows that the expected cost of the Dual Frank Wolfe policy exceeds the optimum by at most $A_{\max} (\sqrt{L}/2 + 2\gamma \log_2(4L)) + \epsilon$. The $\sqrt{L}$ term is the price of replacing the discrete contract assignment by its fluid relaxation, and the $2\gamma \log_2(4L)$ term reflects the non-substitutability in the contract system captured by Assumption~\ref{assum:contract-subs}. Importantly, the algorithm never needs to know $\gamma$; it only appears in the bound. In large markets, $\opt$ scales linearly with $L$: each load has to be executed somehow, so even under optimistic assumptions the DFM spends on the order of a constant per load, yielding $\opt = \Theta(L)$. By contrast, the error terms grow strictly sublinearly when contract capacity is reasonably substitutable (for example, when $\gamma = \tilde O(\sqrt{L})$ in structured contract systems such as lane-level or regional contracts). In that regime, the relative performance loss of Dual Frank Wolfe behaves like $O(1/\sqrt{L}) + O(\gamma \log_2(L)/L)$ and therefore vanishes as the market grows, while $\epsilon$ can be made arbitrarily small by tightening the stopping tolerance. In other words, in the large-market limit the Dual Frank Wolfe policy delivers near-optimal coordination between spot and contract supply in a computationally efficient and modular manner.

\subsection{Contract Capacity is Not Wasted}\label{subsec:contract_utilization}

In this section, we show that the policy induced by Dual Frank Wolfe uses nearly all of the available contract capacity with high probability. 
The tension between spot spend and contract utilization is a major operational driver for digital freight marketplaces. Contracted capacity is largely a fixed commitment once agreements are signed, and under-utilizing it is not just an accounting inefficiency: it creates operational friction (planners see idle capacity), commercial friction (carriers question volumes and renewals), and internal friction (stakeholders interpret low utilization as waste even when overall cost looks reasonable). A coordination policy that reduces spot spend but leaves a meaningful fraction of contracts unused is therefore hard to justify and even harder to implement. We now show that Dual Frank Wolfe avoids this failure mode: the policy it induces drives the system toward near-full contract usage, up to small tolerance and concentration effects.

\begin{proposition}\label{prop:contract_utilization}
    Let $(\lambda, \mu)$ be the iterates returned by \Cref{alg:dfw}. Then the residual load profile $q(\lambda)$ left for contracts satisfies
    $\B(q(\lambda)) \geq B - \epsilon \cdot (\beta/\nu)$,
    where $\beta \coloneqq \max_{\ell \in \L} \beta_\ell$.
\end{proposition}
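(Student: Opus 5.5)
The proof should combine the termination condition of Algorithm~\ref{alg:dfw} with weak duality against the multilinear extension $\B(\cdot)$. It splits into two cases according to whether the returned contract price $\mu$ is large or small.

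\emph{Step 1 (a duality gap bound from termination).} At termination the Frank Wolfe gap satisfies $g \le \epsilon$. The point $(\tilde\lambda,\tilde\mu)=(0,0)$ is feasible for the dual LP \eqref{LP}, so $\hat\lambda^\top q - \hat\mu B \ge 0$. Together with $g\le\epsilon$ this gives
\[
\mu\cdot B - \lambda^\top q(\lambda) \;\le\; \epsilon .
\]
The returned $(\lambda,\mu)$ is a convex combination of feasible LP solutions (starting from $(0,0)$), so it is itself feasible. Hence $\sum_{\ell\in A}\lambda_\ell\le\mu$ for all $A\in\A$ and $0\le\lambda\le a$.

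\emph{Step 2 (weak duality against $\B$).} I will show that $\lambda^\top q \le \mu\cdot\B(q)$ for $q = q(\lambda)$.
\begin{itemize}
\item Fix a realization $X\sim q$, which is an integer vector, and let $x$ be an optimal integer cover achieving $\B(X)$.
\item Then $\lambda^\top X \le \sum_A x_A \sum_{\ell\in A}\lambda_\ell \le \mu \sum_A x_A = \mu\,\B(X)$.
\item Taking expectations, and using $\E[X]=q$ together with the definition of the multilinear extension, gives $\lambda^\top q \le \mu\,\B(q)$.
\end{itemize}
Combining this with Step 1 yields $\mu\,(B-\B(q))\le \epsilon$. So whenever $\mu\ge \nu/\beta$ we obtain $\B(q)\ge B-\epsilon\beta/\nu$, as claimed.

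\emph{Step 3 (the case $\mu<\nu/\beta$).} This is the main obstacle, because Step 2 gives nothing when $\mu$ is small. Here I would show directly that $\B(q)\ge B$.
\begin{itemize}
\item Every load lies in some $A\in\A$ and $\A$ is downward closed, so $\{\ell\}\in\A$. Feasibility then gives $\lambda_\ell\le\mu<\nu/\beta$.
\item By \Cref{assum:pricing-oracle}, $q_\ell(0)=1$. By the Lipschitz property in \Cref{assum:lipschitz}, $q_\ell(\lambda_\ell)\ge 1-\beta\lambda_\ell>1-\nu$, so $\sum_\ell q_\ell > (1-\nu)L$.
\item Each contract covers at most $u$ loads, so $\B(X)\ge |X|/u$ for every integer multiset $X$. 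Taking expectations gives $\B(q)\ge \sum_\ell q_\ell/u > (1-\nu)L/u$.
\item The standing assumption $uB<(1-\nu)L$ then gives $\B(q)> B \ge B-\epsilon\beta/\nu$.
\end{itemize}

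The two cases together prove the proposition. The only delicate points are checking that the iterates stay dual-feasible, which holds by convexity of the feasible region and the initialization at $0$, and the expectation step for the multilinear extension in Step 2. Both should be routine.
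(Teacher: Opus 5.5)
Your proof is correct and follows the paper's argument essentially step for step. Both use feasibility of $(0,0)$ together with the termination condition, then the weak-duality bound $\lambda^\top q \le \mu\,\B(q)$ via integer covers, then the singleton, Lipschitz and $uB<(1-\nu)L$ argument when $\mu$ is small. Your split into $\mu \ge \nu/\beta$ and $\mu < \nu/\beta$ also covers the $\mu = 0$ case, which the paper handles separately.
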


\begin{corollary}[High-probability contract utilization]\label{cor:contract_utilization_hp}
    Let $(\lambda,\mu)$ be the output of \Cref{alg:dfw}, and let $X \sim q(\lambda)$ be the random set of loads not accepted on the load board under the induced pricing policy. Let
    $U(X) \coloneqq \min\{\B(X),\,B\}$
    denote the number of contracts used by the subsequent contract-assignment step. Then for any $\delta \in (0,1)$,
    \[
        U(X) \ \ge\ B - \epsilon \cdot (\beta/\nu)\ -\ \sqrt{\frac{L \ln(1/\delta)}{2}}
    \]
    with probability at least $1-\delta$.
\end{corollary}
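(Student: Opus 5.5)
The plan is to combine \Cref{prop:contract_utilization} with a bounded-differences concentration argument for the random set $X \sim q(\lambda)$. Since the loads' non-procurement events are independent Bernoulli variables with means $q_\ell(\lambda_\ell)$, the vector $X = (X_\ell)_{\ell \in \L}$ has independent coordinates. I will apply McDiarmid's inequality to the function $f(X) \coloneqq \min\{\B(X), B\}$. This gives $\Pr\big(f(X) \le \E[f(X)] - t\big) \le \exp(-2t^2/L)$ whenever $f$ has the bounded-difference property with constant $1$ in each coordinate. Setting $t = \sqrt{L\ln(1/\delta)/2}$ yields failure probability $\delta$. It then remains to lower bound $\E[f(X)]$ by $B - \epsilon\beta/\nu$.

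\emph{Step 1: bounded differences.} Changing a single $X_\ell$ from $0$ to $1$ can increase $\B(X)$ by at most one. Indeed, take an optimal cover for $X$ and add one extra contract consisting of the singleton $\{\ell\}$. This set is feasible by downward closure and the assumption that every load lies in some $A \in \A$. Monotonicity of $\B$ under set inclusion, which holds because covers of supersets cover subsets, gives the other direction. Hence $\B$ is $1$-Lipschitz per coordinate, and truncation by $\min\{\cdot, B\}$ preserves this property.

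\emph{Step 2: the expectation.} By definition of the multilinear extension with $q(\lambda) \in [0,1]^L$, so that $\floor{q} = 0$ except where $q_\ell = 1$, we have $\E[\B(X)] = \B(q(\lambda))$. This is at least $B - \epsilon\beta/\nu$ by \Cref{prop:contract_utilization}. The main obstacle is that the corollary concerns the truncated quantity $\E[\min\{\B(X), B\}]$, which can be strictly smaller than $\min\{\E[\B(X)], B\}$ (Jensen goes the wrong way for the concave $\min$). I would first try to absorb this into a single concentration step. Apply McDiarmid directly to $\B(X)$ to get $\B(X) \ge B - \epsilon\beta/\nu - t$ with probability at least $1-\delta$. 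On that event, either $\B(X) \ge B$, in which case $U(X) = B$ and the claimed bound holds trivially, or $U(X) = \B(X) \ge B - \epsilon\beta/\nu - t$. Either way the claim follows. Thus the truncation causes no loss, and McDiarmid only needs to be applied to the untruncated $\B$, whose mean is exactly the multilinear extension controlled by \Cref{prop:contract_utilization}.

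Finally, I will check that McDiarmid's one-sided lower-tail form gives exactly $\exp(-2t^2/\sum_\ell c_\ell^2)$ with $c_\ell = 1$. This yields the stated $\sqrt{L\ln(1/\delta)/2}$ deviation and completes the argument.
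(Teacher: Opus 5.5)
Your proposal is correct and follows essentially the same route as the paper. Both apply McDiarmid's inequality to the untruncated $\B(X)$ with unit bounded differences (justified by downward closure, which makes singleton contracts feasible), and lower-bound its mean by $\B(q(\lambda)) \ge B - \epsilon\beta/\nu$ via \Cref{prop:contract_utilization}. Both then observe that truncating at $B$ cannot violate the bound on the resulting event.
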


The economic intuition mirrors the tâtonnement interpretation of Dual Frank Wolfe. The contract capacity impacts the shadow price $\mu$: when residual demand is high and contracts are scarce, $\mu$ rises, which in turn permits larger load-level penalties $\lambda_\ell$ and pushes the spot side to price more aggressively so that more carriers accept loads on the load board. When residual demand is low and contracts would sit idle, the opposite happens: the capacity price $\mu$ collapses, which forces the load penalties $\lambda_\ell$ down, and the spot side becomes less aggressive, i.e., more loads remain unaccepted on the load board and naturally flow to contracts. In other words, unused contract capacity is not a stable outcome of the iterative shadow price updates under \Cref{alg:dfw}---slack capacity drives $\mu$ and $\{\lambda_\ell\}_\ell$ down, and that signal shifts demand volume back toward contracts until the two sides approximately clear. This is precisely the operational behavior one wants from a coordination layer because it ensures that the committed contract capacity is utilized well. 

Contrast this with the LBA algorithm (Section~\ref{sec:naive-alg}), which explicitly targets 100\% contract utilization in its design philosophy, and ends up suffering higher costs as a consequence because it ignores the preferences of the carriers in the spot marketplace. Our DFW algorithm on the other hand aims to minimize cost through the coordination of spot pricing and contract assignment, while simultaneously attaining a near 100\% contract utilization as a desirable by-product. This difference in performance between the LBA and DFW algorithms highlights the power of flexibility in contract utilization: just a small amount of under-utilization risk can be leveraged to dramatically lower cost.

\section{Applications}\label{sec:numerics}

In this section, we discuss various types of contracts, characterize their $\gamma$ values, and numerically evaluate the performance of Dual Frank Wolfe ( Algorithm~\ref{alg:dfw}) on synthetic and real-world data. As a comparison benchmark, we use the Load Bifurcation Algorithm (LBA) discussed in Section~\ref{sec:naive-alg}. Relative savings are reported as
$\big(\mathbb{E}[\text{Cost(LBA)}]-\mathbb{E}[\text{Cost(DFW)}]\big)/\mathbb{E}[\text{Cost(LBA)}]$.

\subsection{Lane-Level Contracts}

We start with the simplest type of contract capacity: lane-level contracts. These are commitments from carriers to ship a certain number of loads on a particular lane (origin-destination pair) at a specified frequency (daily/weekly/monthly). As discussed in Section~\ref{sec:model}, this becomes $\A = \{\{\ell\} \mid \ell \in \L\}$ in our model because one can decompose the problem along each lane. For lane level contracts, we have $\B(X) = \sum_\ell X_\ell$, and therefore $\gamma = 0$. We note this simple result in the following proposition.

\begin{proposition}\label{prop:lane-level-gamma}
    Lane-level contracts satisfy \Cref{assum:contract-subs} with $\gamma = 0$.
\end{proposition}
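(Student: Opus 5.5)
We need to show $\B(X/2) \le \B(X)/2$ for all $X\in\R_+^L$ when $\A=\{\{\ell\}\mid \ell\in\L\}$. The plan is first to compute $\B$ exactly on integer multisets, then pass to the multilinear extension, where linearity does the work.

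\textbf{Step 1: integer points.} For $X\in\Z_+^L$, I will show $\B(X)=\sum_\ell X_\ell$. Since every feasible $A\in\A$ is a singleton $\{\ell\}$ (together with the empty set, which never helps), the covering constraint for load $i$ becomes $x_{\{i\}}\ge X_i$. Minimizing $\sum_A x_A$ over $x_A\in\Z_+$ therefore sets $x_{\{i\}}=X_i$, which gives $\B(X)=\sum_\ell X_\ell$.

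\textbf{Step 2: the extension.} For $d\in\R_+^L$, by definition $\B(d)=\E[\B(\floor{d}+Z)]$ with $Z\sim d-\floor{d}$, where $Z_\ell$ is Bernoulli with mean $d_\ell-\floor{d_\ell}$. By Step 1 and linearity of expectation,
\[
\B(d)=\sum_\ell \bigl(\floor{d_\ell}+\E[Z_\ell]\bigr)=\sum_\ell d_\ell .
\]
This holds regardless of the correlation structure of $Z$, since only the marginals enter.

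\textbf{Step 3: conclude.} For any $X\in\R_+^L$, $X/2\in\R_+^L$, so Step 2 gives
\[
\B(X/2)=\sum_\ell X_\ell/2=\B(X)/2 .
\]
Hence Assumption~\ref{assum:contract-subs} holds with $\gamma=0$. The requirement $\gamma<B/4$ is satisfied because $B>\nu L>0$.

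There is no real obstacle here. The only point needing care is checking that the multilinear extension is well-defined for non-integer $d$ and that it reduces to the linear function. This follows because $\B$ is additive (modular) on integer points, so its extension is exactly the linear function $\sum_\ell d_\ell$.
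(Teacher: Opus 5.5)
Your proof is correct and takes essentially the paper's approach, which simply notes that singleton contract sets give $\B(X)=\sum_\ell X_\ell$, so halving is exact and $\gamma=0$. You also make explicit two points the paper leaves implicit: the multilinear extension of this additive function is $\sum_\ell d_\ell$ by linearity of expectation, and $\gamma=0<B/4$.
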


\begin{figure}[t]
    \centering
    \includegraphics[width=0.9\textwidth]{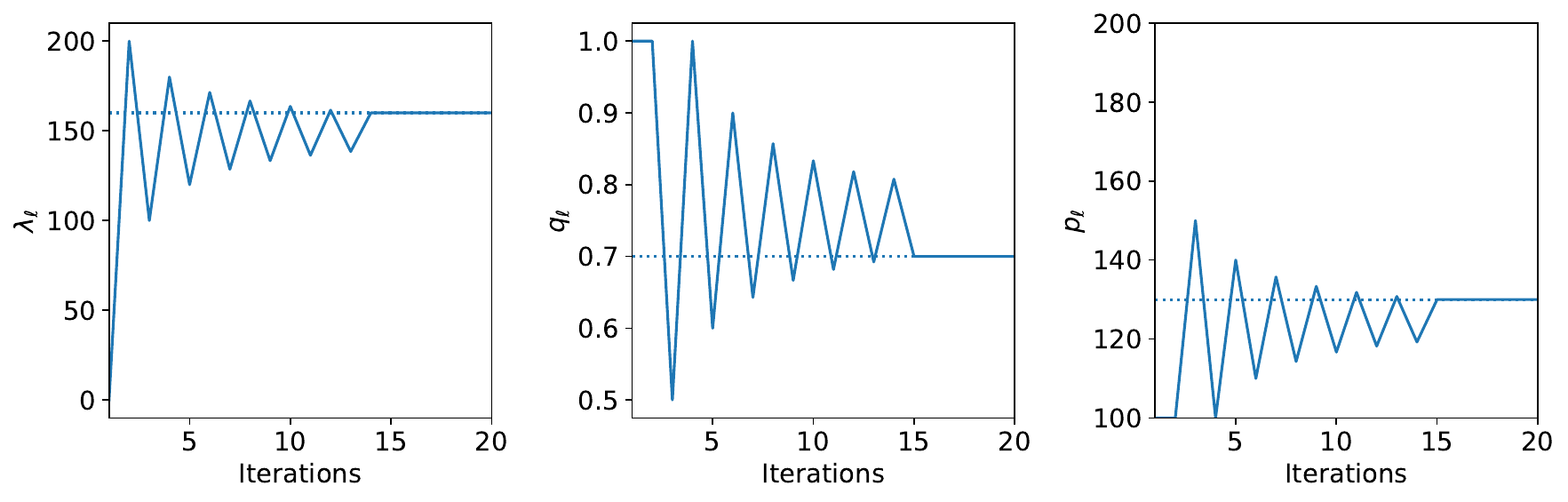}
    \caption{Iterates of the Dual Frank Wolfe algorithms for Example~\ref{alg:dfw}.}
    \label{fig:example_sim}
\end{figure}

To build intuition, we start with the simple setup of Example~\ref{example:naive-bad}. Briefly, the setup is given by ${L = 1000}$ loads; $B = 700$ contracts; static pricing with minimum-acceptable-price uniformly distributed on $[100, 200]$ for every load; alternate cost $a_\ell = 300$. Upon implementing Dual Frank Wolfe, we find that the dual prices $\lambda_\ell$ converges to exactly $160$ in 15 iterations of the \textbf{While} loop, yielding a Frank Wolfe error of $g = 0$. This in turn implies $q_\ell(\lambda_\ell) = 0.3$ and a spot price of $p_\ell = 130$ for all loads $\ell \in \L$, which is exactly the alternate solution that drastically improved upon the naive LBA in Example~\ref{example:naive-bad} and dropped costs by over 30\%. Figure~\ref{fig:example_sim} plots the evolution of the dual variables $\lambda_\ell$, the non-procurement probability $q_\ell$, and the spot price $p_\ell$ under the  Dual Frank Wolfe algorithm.

\begin{figure}[t]
    \centering
    \includegraphics[width=0.9\textwidth]{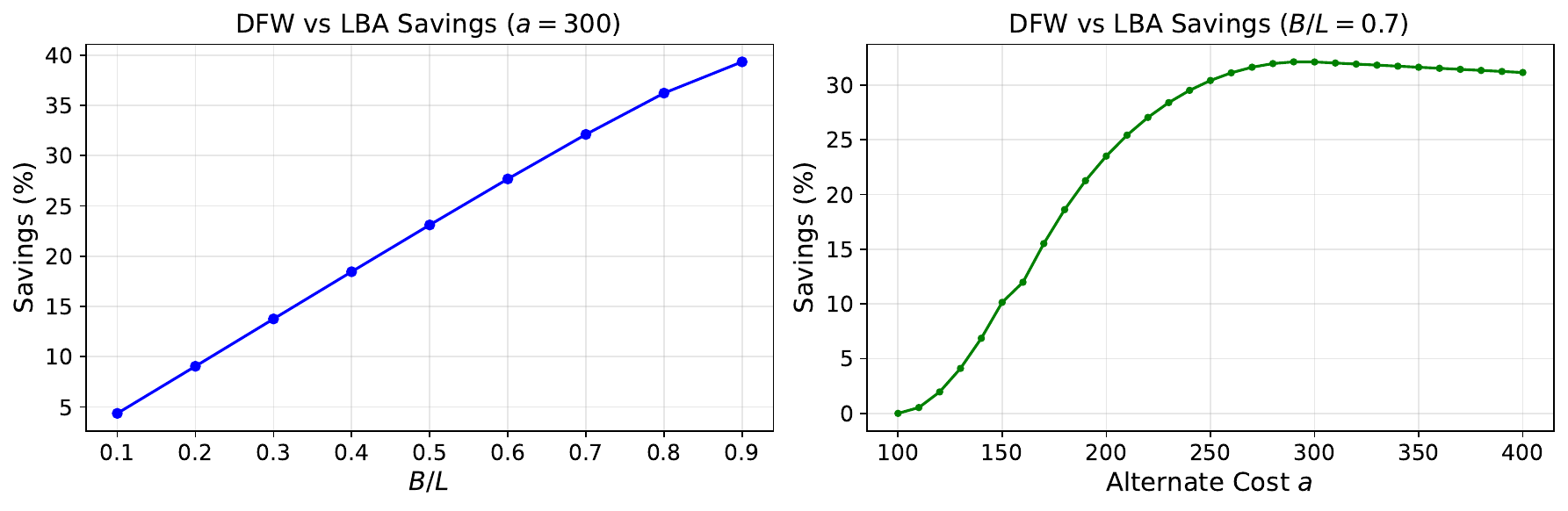}
    \caption{Relative savings of DFW against LBA in Example~\ref{alg:dfw} as a function of the alternate-cost and contract capacity.}
    \label{fig:alternate-cost}
\end{figure}

\Cref{fig:alternate-cost} depicts the impact of the the contract capacity $B$ and the alternate cost $a_\ell$ on the relative savings of the DFW algorithm against LBA. First, we hold the alternate cost $a_\ell = 300$ fixed and vary the contract capacity $B$. When $B\ll L$, both DFW and LBA are forced to procure almost all loads via the spot marketplace and alternate channels, yielding similar costs. As the contract capacity increases, DFW is able to leverage it to execute the loads which are not desired on the spot marketplace and therefore carry a premium, leading to substantial savings compared to LBA. Next, we hold the contract capacity $B = 700$ fixed and vary the alternate cost $a_\ell$. When $a_\ell = 100$, neither algorithm procures anything on the spot marketplace, instead relying entirely on contracts and alternate channels. As $a_\ell$ increases and the alternate channel becomes more expensive, DFW's relative savings increase because, unlike LBA, it can fully utilize the spot and contract capacity without substantially relying on the alternate channel.  

Finally, we modify this example to account for realistic minimum-acceptable-price distributions. In particular, based on data for a particular lane from a major Digital Freight Marketplace (DFM), we trained a sigmoid curve\footnote{Here a sigmoid curve with parameters $(k,x_0)$ is defined as $f(x) = 1/(1 + e^{-k\cdot (x-x_0)})$.} on acceptance data of carriers in the spot marketplace as a function of the price offered. This sigmoid curve captures the probability of a load getting accepted by a carrier on the spot marketplace as a function of the price. Importantly, we would like to disclose that this logistic curve was trained on rescaled data and the scale does not reflect the prices offered by the DFM. 

\begin{figure}[t]
    \centering
    \begin{subfigure}{0.32\textwidth}
        \centering
        \includegraphics[width=\linewidth]{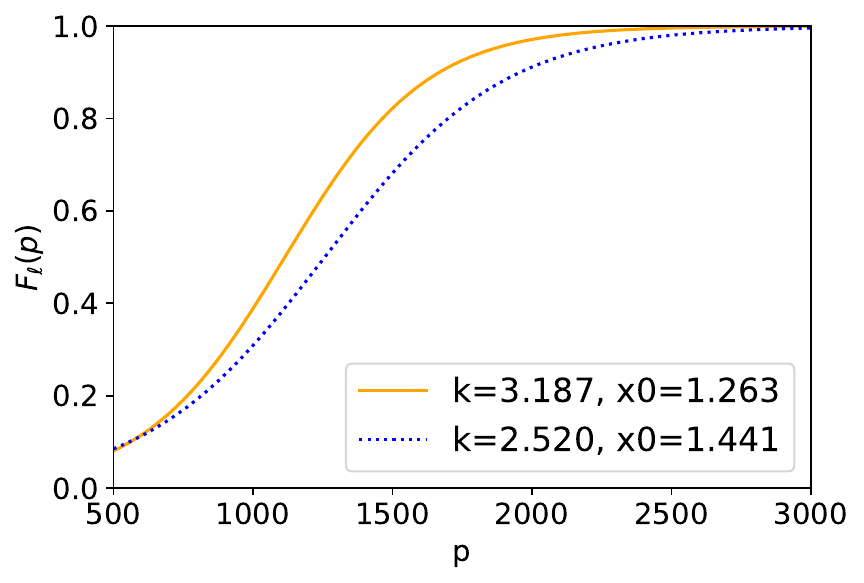}
    \end{subfigure}
    \hfill
    \begin{subfigure}{0.32\textwidth}
        \centering
        \includegraphics[width=\linewidth]{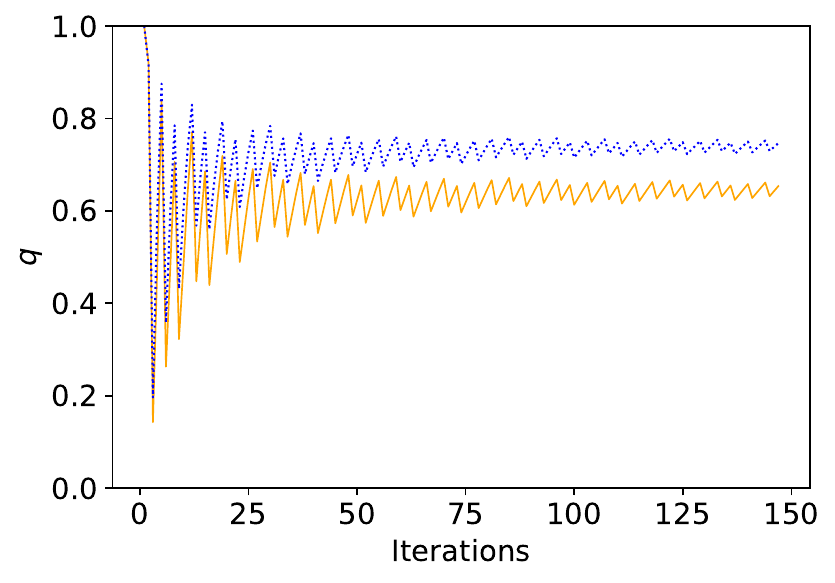}
    \end{subfigure}
    \hfill
    \begin{subfigure}{0.32\textwidth}
        \centering
        \includegraphics[width=\linewidth]{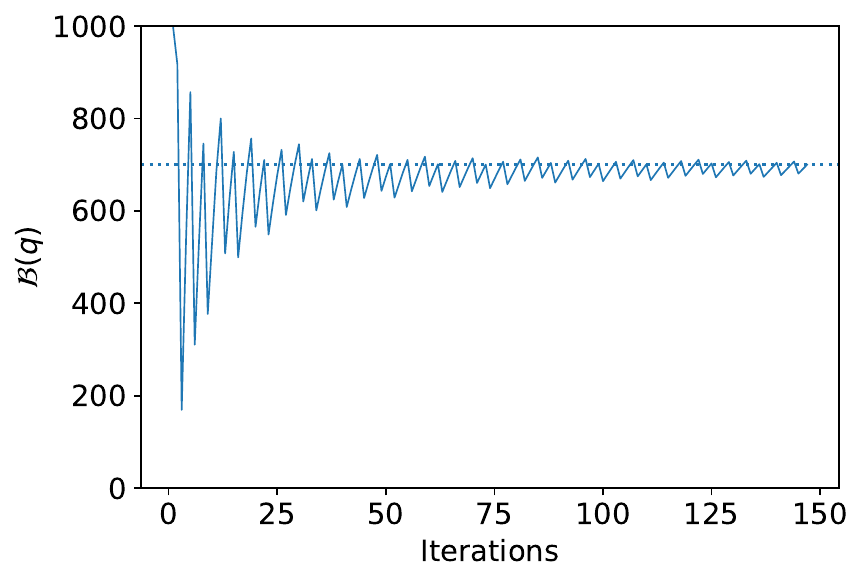}
    \end{subfigure}

    \caption{Lane level contracts with sigmoid spot marketplace pricing curves.}
    \label{fig:sigmoid_lane}
\end{figure}

We split the data into two segments based on the lead time of the load and Figure~\ref{fig:sigmoid_lane} depicts the corresponding sigmoid curves. Moreover, to account for the scale of the problem, we set the alternate cost to be $a_\ell = 5000$ for all loads. With a tolerance of $\epsilon = 100$, the DFW algorithm converged in 147 iterates and the resulting policy yielded a total expected cost of 309,813. This was $\approx$50\% lower than the total expected cost incurred by LBA (597,380). Figure~\ref{fig:sigmoid_lane} shows the convergence behavior of the DFW algorithm on this example.

\subsection{Bipartite Round-Trip contracts}\label{sec:app-bipartite}

Beyond lane-level volume commitments, a second contract structure that appears naturally in long-range freight networks is
driven by the return-to-domicile requirement. In cross-country and inter-regional operations, drivers strongly prefer
tours that start and end at a home base (domicile) within the contracted work shift. Operationally, this brings the driver and equipment back home. Therefore, many contracts explicitly guarantee a return to the
home domicile~\footnote{\url{https://relay.amazon.com/blog/amazon-trucking-contracts}; accessed February 2026.}. As a result, the efficient use of contracted capacity hinges on the ability to \emph{pair} complementary
loads into out-and-back sequences: if a driver takes an $O\to D$ load, the natural next step is to find a compatible
$D\to O$ backhaul to complete the round trip. When such pairing is not available, the driver can still execute a single
load, but the tour typically requires an empty repositioning leg (deadhead) to return home, which is precisely the type
of waste that contracts and routing teams aim to avoid. We capture this round-trip pairing structure via a bipartite graph.

We consider the simplest two-domicile model with locations $O$ and $D$. There are $B$ contracts with drivers domiciled
in $O$. Each load $\ell\in\L$ either starts in $O$ and ends in $D$, or starts in $D$ and ends in $O$; abusing notation,
we denote the two sets of directional loads by $O$ and $D$, respectively. Each contract can be assigned either an
individual load (by pairing it with an empty movement), or a pair
of loads that forms a round trip, i.e., one load from $O$ and one from $D$. Feasibility of pairing depends on operational
constraints such as appointment windows, trailer compatibility, and shift-length feasibility. We encode these constraints
by a bipartite graph $G=(O,D,E)$, where an edge $(\ell_O,\ell_D)\in E$ indicates that the pair can be executed by a single contracted driver while satisfying the return-to-domicile requirement. Thus, the feasible family $\A$ consists of all
singletons $\{\ell\}$ and all feasible pairs $\{\ell_O,\ell_D\}$ corresponding to edges in $E$.

We calibrated this model using contract-feasibility data from a major Digital Freight Marketplace, constructing a graph
with $|O|=|D|=50$ loads and an empirical edge set $E$ capturing which opposing-direction loads can be paired into a round trip. On the spot side, we fit a logistic conversion curve for each load based on historical load-board outcomes. For our simulation runs, we set the alternate-channel cost to be a controlled premium over the median
posted-price level: for $x\in\{0,5,10,20,50\}$, we set $a_\ell=(1+x/100)\cdot p_\ell^{\mathrm{med}}$, where
$p_\ell^{\mathrm{med}}$ is the price at which the fitted acceptance probability is $1/2$. For each value of $x$, we
estimated expected costs via $1{,}000$ Monte Carlo runs. Across all values of $x$, Dual Frank Wolfe converged rapidly: every run
terminated within $40$ Frank Wolfe iterations.


\begin{table}[H]
\centering
\caption{Comparison of DFW and LBA on a real-world bipartite-contract instance.}
\label{tab:bipartite-savings}
\begin{tabular}{rccccc}
\toprule
Alt.\ premium $x$ (\%) & 0 & 5 & 10 & 20 & 50 \\
\midrule
Savings $\pm$ Std. Err. (\%) & \textbf{3.5 $\pm$ 0.1} & \textbf{4.1 $\pm$ 0.1} & \textbf{4.8 $\pm$ 0.1} & \textbf{6.2 $\pm$ 0.1} & \textbf{9.6 $\pm$ 0.1} \\
\bottomrule
\end{tabular}
\end{table}

Table~\ref{tab:bipartite-savings} summarizes the results. Dual Frank Wolfe consistently improves upon the Load
Bifurcation Algorithm, with savings close to $10\%$ when the alternate channel is $50\%$ more expensive than the median spot cost---as is often the case in practice because the alternate channel is a team of human operators manually calling carriers to negotiate prices at short notice. Even a single-percentage cost reduction is operationally significant at the scale of a large DFM because their annual procurement costs routinely run in the millions of dollars, often exceeding a billion dollars for the large firms. These savings are in agreement with our theory: by discovering shadow prices that internalize contract scarcity under the pairing constraints, Dual Frank Wolfe adjusts spot-market prices so that the residual load set remains easier to cover with round-trip contracts. 

To further verify the alignment with theory, we tested how the degree of the bipartite graph affect the level of relative savings. In particular, as the degree of the graph increases, so does the level of substitutability in the contracts. This is exactly what the substitutability parameter $\gamma$ is defined to capture: when each load has many feasible partners, contract capacity can be repacked across many different residual load sets and the system behaves as if contracts were broadly interchangeable (small $\gamma$); when feasible pairings are scarce, contract utilization becomes brittle to the particular loads that remain after the spot marketplace clears (large $\gamma$). The extremes make this intuition sharp. In a $1$-regular graph, every load has essentially a unique compatible partner, so random spot-market realizations frequently “break” potential round trips by removing one endpoint while leaving the other stranded, yielding a large effective slack. At the other extreme, in a complete bipartite graph any $O\to D$ load can be paired with any $D\to O$ load, so the return-to-domicile constraint imposes little combinatorial friction and the corresponding $\gamma$ is small.

To probe this mechanism, we studied instances where the contract-assignment feasibility was captured by random graphs. In particular, we use Erd\H{o}s--R'enyi random bipartite graphs $G(50,50,p)$ where each edge is sampled independently with probability $p$. We set $|O|{=}|D|{=}50$, $B=25$, i.i.d.\ carrier values $\mathrm{Uniform}[100,200]$, and alternate cost $a=300$, and averaged the relative savings of DFW over LBA over 1,000 graph draws and spot realizations. Figure~\ref{fig:savings-vs-p} depicts the results. First, we found that even when the average degree is as low as $5$ (at $p=0.10$), DFW’s relative savings are substantial and exceed $12\%$. Second, these savings plateau around $16.5\%$ by average degree $\approx 10$ (at $p=0.20$), indicating that additional edges yield diminishing returns once residual loads are almost always pairable. Consistent with the substantial savings observed in Table~\ref{tab:bipartite-savings}, the empirical bipartite graph extracted from the real-world Digital Freight Marketplace data had an average degree above $10$, placing it squarely in this high-substitutability regime. Finally, the relative savings are non-monotonic at very low degrees: at $p=0$, there are no round-trip edges and we are back to the lane-level contract setting---DFW delivers about $9\%$ savings, while for very sparse networks with degree close to 1, the round-trip constraint is brittle and DFW can underperform LBA---the substitutability parameter $\gamma$ is large for such graphs because every load has a unique partner and there is little scope for substitution. Overall, except for a narrow window of degrees which lead to graphs that are sparse but not too sparse, DFW generates substantial relative savings, comfortably exceeding 10\% for the realistic values we observed in real-world data.

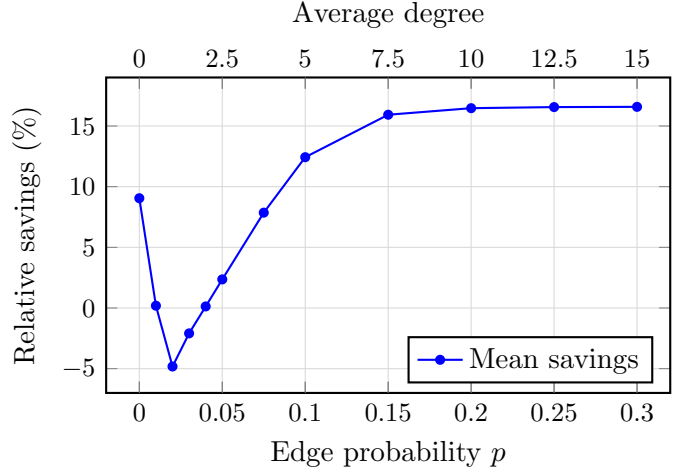
\begin{figure}[t]
\centering
\begin{tikzpicture}
\begin{axis}[
    width=0.55\textwidth,
    height=0.35\textwidth,
    xlabel={Edge probability $p$},
    ylabel={Relative savings (\%)},
    xmin=-0.02, xmax=0.32,
    ymin=-7, ymax=19,
    xticklabel style={/pgf/number format/fixed},
    ytick distance=5,
    extra x ticks={0, 0.05, 0.10, 0.15, 0.20, 0.25, 0.30},
    extra x tick labels={0, 2.5, 5, 7.5, 10, 12.5, 15},
    extra x tick style={
        ticklabel pos=right,
        grid=none,
    },
    extra description/.append code={
        \node[font=\normalsize] at (axis description cs:0.5,1.2) {Average degree};
    },
    grid=major,
    grid style={gray!30},
    mark size=1.5pt,
    thick,
    legend pos=south east,
    tick label style={font=\small},
    label style={font=\normalsize},
    title style={font=\normalsize},
]

\addplot[
    color=blue,
    mark=*,
    error bars/.cd,
    y dir=both,
    y explicit,
] coordinates {
    (0.00,  9.05) +- (0, 0.01)
    (0.01,  0.19) +- (0, 0.155)
    (0.02, -4.82) +- (0, 0.085)
    (0.03, -2.09) +- (0, 0.06)
    (0.04,  0.13) +- (0, 0.065)
    (0.05,  2.36) +- (0, 0.07)
    (0.075,  7.86) +- (0, 0.08)
    (0.10, 12.43) +- (0, 0.06)
    (0.15, 15.93) +- (0, 0.02)
    (0.20, 16.47) +- (0, 0.01)
    (0.25, 16.56) +- (0, 0.01)
    (0.30, 16.58) +- (0, 0.01)
};
\addlegendentry{Mean savings}

\end{axis}
\end{tikzpicture}
\caption{Relative savings of DFW over LBA on random Erd\H{o}s--R\'{e}nyi bipartite graphs $G(50,50,p)$ as a function of edge probability~$p$. 95\% confidence intervals are smaller than the marker size.}
\label{fig:savings-vs-p}
\end{figure}

\subsection{Regional Contracts}\label{sec:app-interval}

We now consider contracted fleet operations in a small-to-medium sized
region like a metropolitan area or special economic zone. Under such contracts, a driver commits to a fixed shift---say, a $13$-hour window
(e.g., 6am--7pm)---and the driver may be routed to execute multiple loads within the region during the shift.
This contract form is especially natural in short-haul and drayage operations, where loads are geographically local
and the key operational constraint is temporal: a driver can only move one load at a time.

To model such contracts, we discretize the shift into hourly slots $t \in \{1,\dots,T\}$, and associate each load
$\ell\in\L$ with an \emph{active interval} $I_\ell \subseteq \{1,\dots,T\}$ representing the hours during which a driver
is effectively occupied by that load (pickup, transit, unloading, and any required dwell). In practice, these active intervals are often padded to account for repositioning time between drop-off and the next pickup (deadhead),
stochastic congestion, and operational slack (e.g., check-in delays). Concretely, one may take a nominal service window
and extend it by a fixed buffer (say 15--45 minutes, rounded to the hourly grid) at the end so that
any schedule that is feasible for the padded intervals is operationally feasible with high probability.

Local geography can be incorporated via \emph{colors}. In particular, the region can be partitioned into a small number of areas, and each load can be assigned a color $c(\ell)\in[K]$ indicating the area in which the load lies. Operationally, drivers frequently prefer staying within an area for a
shift due to familiarity, parking/access constraints, yard rules, and end-of-shift return requirements. We model this
as a \emph{monochromatic} restriction: all loads assigned to a contracted driver must have the same color.
In this colored interval model, a feasible assignment for a single contract is a set of loads with the same color whose
active intervals do not overlap. The resulting contract-requirement function $\B(\cdot)$ admits an explicit form. For integer demand vectors
$X\in\mathbb{Z}_+^{|\L|}$ (allowing multiple copies of each load, as in our model), the minimum number of contracts needed
is determined by the largest hourly overlap within each color class:
\begin{align*}
\B(X)\ \coloneqq\ \sum_{k=1}^K\ \max_{t\in\{1,\dots,T\}}\ \sum_{\ell:\ c(\ell)=k,\ t\in I_\ell} X_\ell.
\end{align*}

\begin{proposition}\label{prop:gamma-interval-graph}
    Regional contracts satisfy \Cref{assum:contract-subs} with $\gamma = \sqrt{L \cdot K \log(T)/2}$.
\end{proposition}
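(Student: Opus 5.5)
The plan is to compare both sides through the deterministic ``fluid'' value $\sum_{k}\max_t \sum_{\ell:\,c(\ell)=k,\ t\in I_\ell} d_\ell$. Write $\Lambda_{k,t}(d) \coloneqq \sum_{\ell:\ c(\ell)=k,\ t\in I_\ell} d_\ell$. This is a linear function defined for all $d\in\R_+^L$. On integer vectors, $\B(X)=\sum_{k}\max_t \Lambda_{k,t}(X)$, and for general $d$ the multilinear extension is $\B(d)=\E_{Y\sim d}\big[\sum_k \max_t \Lambda_{k,t}(Y)\big]$, where $Y=\floor{d}+Z$ and $Z$ has independent Bernoulli coordinates with means $d-\floor{d}$. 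Since $\E[Y]=d$, the inequality $\B(X/2)\le \B(X)/2+\gamma$ will follow from a lower bound on $\B(X)$ and an upper bound on $\B(X/2)$, both stated in terms of $\sum_k\max_t\Lambda_{k,t}$ evaluated at the mean.

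\textbf{Lower bound on $\B(X)$ (Jensen).} For each color $k$, the map $y\mapsto \max_t \Lambda_{k,t}(y)$ is a maximum of linear functions and hence convex. Applying Jensen to $X'\sim X$ gives $\E[\max_t \Lambda_{k,t}(X')]\ge \max_t \Lambda_{k,t}(X)$. Summing over $k$ and using linearity of each $\Lambda_{k,t}$, we obtain
\[
\tfrac12\B(X)\ \ge\ \sum_k \max_t \Lambda_{k,t}(X/2).
\]

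\textbf{Upper bound on $\B(X/2)$ (concentration of a maximum).} Let $Y\sim X/2$ and write $\Lambda_{k,t}(Y)=\Lambda_{k,t}(X/2)+S_{k,t}$, where $S_{k,t}=\sum_{\ell:\,c(\ell)=k,\,t\in I_\ell}(Z_\ell-\E Z_\ell)$. Then
\[
\max_t \Lambda_{k,t}(Y)\le \max_t\Lambda_{k,t}(X/2)+\max_t S_{k,t}.
\]
Each $S_{k,t}$ is a sum of independent centered variables, each supported on an interval of length at most $1$. By Hoeffding's lemma, $S_{k,t}$ is sub-Gaussian with variance proxy $n_k/4$, where $n_k$ is the number of loads of color $k$. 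The standard maximal inequality for $T$ sub-Gaussian variables, obtained by bounding $\E\max\le \tfrac1s\log\sum_t \E e^{sS_{k,t}}$ and optimizing over $s$, gives
\[
\E\max_t S_{k,t}\le \sqrt{2\cdot (n_k/4)\log T}=\sqrt{n_k\log(T)/2}.
\]
This holds without any independence across different $t$, which matters because the $S_{k,t}$ share loads.

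\textbf{Combining.} Putting the two bounds together gives
\[
\B(X/2)\le \tfrac12\B(X)+\sqrt{\log(T)/2}\,\sum_k\sqrt{n_k}.
\]
Since $\sum_k n_k = L$, Cauchy--Schwarz yields $\sum_k\sqrt{n_k}\le\sqrt{KL}$, which produces exactly $\gamma=\sqrt{LK\log(T)/2}$.

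The main obstacle is bookkeeping rather than substance. First, one must confirm that the interval formula for $\B$ on integer multisets, combined with the definition of the multilinear extension, really yields the expectation representation used above. Second, the Jensen step is applied to $\B(X)$ for fractional $X$, so the randomness there has to be handled correctly. Third, the maximal inequality must use the sub-Gaussian constant $1/4$ so that the constant comes out as $\sqrt{\cdot/2}$. The side requirement $\gamma<B/4$ in \Cref{assum:contract-subs} does not follow from this argument. It should be read as a regime condition on the instance, for example $B$ linear in $L$ with $K\log T=o(L)$, which I would note explicitly.
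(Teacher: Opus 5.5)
Your proof is correct and follows essentially the same route as the paper. The shared steps are: split $\B$ by color; bound $\max_t$ of (mean plus centered deviation) by $\max_t$ of the mean plus $\E\max_t$ of the deviation; control the deviation with Hoeffding's lemma and the sub-Gaussian maximal inequality (variance proxy $n_k/4$, which the paper reaches via the intermediate count $M_k(X)\le L_k$); and finish with Cauchy--Schwarz.

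Your Jensen step is a genuine small improvement. The paper's proof only fixes $X\in\Z_+^{|\L|}$, where $\max_t \Lambda_{k,t}(X)/2$ equals $\B_k(X)/2$ exactly, whereas \Cref{assum:contract-subs} quantifies over all $X\in\R_+^L$. Your remark that $\gamma<B/4$ is a separate regime condition, not implied by this argument, is also accurate.
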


The proposition provides a concrete substitutability parameter $\gamma$ for regional contracted fleets. It shows
that these contracts are broadly substitutable even when the load mix is distributed unevenly across neighborhoods and across the hours of a shift. In particular, the number of contracts required to cover it remains stable: the additional slack needed to accommodate the routing constraints grows only
sublinearly, on the order of $O(\sqrt{L})$. Since the overall performance bound already carries an unavoidable
$O(\sqrt{L})$ term from the fluid approximation, this value of $\gamma$ does not materially increase the regret of the
DFW algorithm.

\section{Conclusion}

This paper studies the operational problem of coordinating committed contract capacity with load-board spot pricing in freight marketplaces. We formalize a benchmark in which a marketplace jointly selects spot pricing policies (over time) and a contract assignment (over combinatorial feasible routes) to minimize total expected procurement cost. We show that the common status-quo load bifurcation approach---committing ex ante which loads are assigned to contracts and which go to spot---can be highly suboptimal because it ignores the preferences of the carriers on the spot marketplace. We propose Dual Frank Wolfe (DFW), a modular coordination layer built around shadow prices for loads and contract capacity. DFW interfaces with existing systems through black-box oracles: a per-load pricing oracle that optimizes the marketplace’s implementable spot policy given a terminal penalty, and a contract-assignment oracle captured through a small dual LP. The resulting tâtonnement-style procedure converges in a number of iterations that does not scale with market size, admits parallel implementation, and yields a policy whose expected cost is provably near-optimal in large markets under a natural contract-substitutability condition. Importantly, DFW also drives the system toward near-full utilization of committed contract capacity, addressing a key operational constraint for deployability.

Numerical experiments on synthetic and semi-synthetic instances calibrated to a major DFM support the theory. Across lane-level and pairing-based round-trip contracts, DFW consistently improves upon the load-bifurcation baseline, with gains that become larger as the alternate channel becomes more expensive and as contract flexibility increases. Overall, the results suggest that shadow-price-based coordination can deliver material procurement savings without requiring a redesign of the spot pricing engine or the contract assignment infrastructure.

This topic of coordinating contract assignment and spot pricing decisions at the time of execution remains very underexplored~\citep{acocella2023research}. Many open problems remain: What is the impact of late-stage load cancellations? How do tender rejections from carriers impact outcomes? How should one account for correlations in spot outcomes for sparse markets? How should one make pricing decisions on the shipper side to determine the load set when the downstream carrier procurement has been coordinated?

{\singlespacing
\bibliographystyle{plainnat}
\bibliography{refs}
}

\allowdisplaybreaks
\newpage
\appendix

\pagenumbering{arabic}\renewcommand{\thepage}{ec \arabic{page}}
\section{Missing Proofs for Section~\ref{sec:alg}}

\subsection{Proof of Lemma~\ref{lemma:smoothness}}

\begin{proof}
    Observe that $r_\ell$ can be written as
    \begin{align*}
        r_\ell(w) = \min_{P \in \pp_\ell} S_\ell(P) + Q_\ell(P) \cdot w\,.
    \end{align*}
    In other words, $w \mapsto r_\ell(w)$ is the minimum of linear functions $w \mapsto S_\ell(P) + Q_\ell(P) \cdot w$. Therefore, the Envelope Theorem applies and we have
    \begin{align*}
        r'_\ell(w) = Q_\ell(P_\ell^*(w)) = q_\ell(w)\,.
    \end{align*}
    Finally, using \Cref{assum:lipschitz}, we get
    \begin{align*}
        |r'_\ell(w) - r'_\ell(\tilde w)| = |q_\ell(w) - q_\ell(\tilde w)| \leq \beta_\ell\cdot |w - \tilde w|\,.
    \end{align*}
    Thus, $r_\ell(\cdot)$ is $\beta_\ell$-smooth.
\end{proof}

\subsection{Proof of Lemma~\ref{lemma:approx-convexity}}

Lemma~\ref{lemma:approx-convexity} is foundational to our results. Its proof proceeds over multiple steps. However, before we dive into the proof, we need to establish some prerequisites.

\begin{lemma}
\label{lem:mon-subadd}
The function $\B : \Z_+^L \to \R_+$ satisfies
\begin{enumerate}
    \item (Monotonicity) If $X,Y \in \Z_+^L$ and $X \le Y$ coordinatewise, then
    \[
    P(X) \;\le\; P(Y).
    \]
    \item (Subadditivity) For all $U,V \in \Z_+^L$,
    \[
    \B(U+V) \;\le\; \B(U) + \B(V).
    \]
\end{enumerate}
\end{lemma}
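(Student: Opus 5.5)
Both properties follow directly from the integer-program definition of $\B(X) = \min\{\sum_{A} x_A \mid \sum_{A \ni i} x_A \ge X_i\ \forall i,\ x_A \in \Z_+\}$. The plan is to show in each case that an optimal solution for one instance can be turned into a feasible solution for the other, and then compare objective values. (The statement writes $P(X) \le P(Y)$ for monotonicity; this is plainly meant to be $\B(X) \le \B(Y)$, and that is what we prove.)

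\emph{Monotonicity.} Fix $X \le Y$ in $\Z_+^L$. Let $x^Y \in \Z_+^{\A}$ be an optimal solution of the program defining $\B(Y)$; one exists because the program is a minimization over nonnegative integers with a finite feasible value. To see feasibility, use assumption (i) of the model: each load $\ell$ lies in some $A \in \A$, so choosing $Y_\ell$ copies of such an $A$ for every $\ell$ covers $Y$. For every load $i$,
\[
\sum_{A \ni i} x^Y_A \ \ge\ Y_i \ \ge\ X_i ,
\]
so $x^Y$ is also feasible for the program defining $\B(X)$. Hence $\B(X) \le \sum_A x^Y_A = \B(Y)$. Downward closure of $\A$ is not needed for this step, since coverage constraints are inequalities.

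\emph{Subadditivity.} Fix $U, V \in \Z_+^L$, and let $x^U$ and $x^V$ be optimal solutions for $\B(U)$ and $\B(V)$. Set $x = x^U + x^V$, which lies in $\Z_+^{\A}$. For every load $i$, linearity of the coverage constraint gives
\[
\sum_{A \ni i} x_A \ \ge\ U_i + V_i .
\]
So $x$ is feasible for $\B(U+V)$, and therefore $\B(U+V) \le \sum_A x_A = \B(U) + \B(V)$.

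There is no real obstacle here. The only point needing care is that optimal solutions exist, i.e., that the programs are feasible and their minima are attained. Feasibility is supplied by assumption (i), and attainment is automatic because the objective takes values in the nonnegative integers.
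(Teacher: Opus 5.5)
Your proof is correct and follows the paper's argument: monotonicity holds because a cover feasible for $Y$ is feasible for $X$, and subadditivity holds because summing optimal covers for $U$ and $V$ gives a feasible cover for $U+V$. Your extra remarks are sound refinements the paper leaves implicit: feasibility via assumption (i), attainment of the minimum by integrality, and reading $P$ as $\B$.
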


\begin{proof}
(1) Recall the definition of $\B(X) = \min \{\sum_{A \in \A} x_A \mid \sum_{A: i \in A} x_{A} \geq X_i\ \forall\ i \in \L; x_A \in \Z_+\ \forall\ A \in \A\}$ If $X \le Y$, then any $x$ feasible for $Y$ is also feasible for $X$. Thus the feasible region for $Y$ is contained in that for $X$, and the minimum cost for $X$ cannot exceed that for $Y$.

(2) Let $\{x_A\}_A$ be optimal for $U$ and $\{y_A\}_A$ be optimal for $V$. Define $z_A = x_A + y_A$. Then for each $i \in [L]$,
\[
\sum_{A \ni i} z_A
=
\sum_{A \ni i} x_A + \sum_{A \ni i} y_A
\;\ge\;
U_i + V_i,
\]
so $z$ is feasible for $U+V$. Hence
\[
P(U+V)
\;\le\;
\sum_{A} z_A
=
\sum_A x_A + \sum_A y_A
=
P(U) + P(V).
\]
\end{proof}

\begin{lemma}\label{lemma:sub-additivity}
    For all $d, \tilde d \in \R_+^L$, we have
    \begin{align*}
        \B( d+ \tilde d ) \leq \B( d ) + \B( \tilde d )\,.
    \end{align*}
\end{lemma}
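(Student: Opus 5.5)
We lift the integer subadditivity of \Cref{lem:mon-subadd} to the multilinear extension through a coupling argument. Recall that for $d \in \R_+^L$ we defined $\B(d) = \E[\B(X)]$, where $X = \floor{d} + Z$ and $Z$ has independent Bernoulli coordinates with means $d - \floor{d}$. The plan is to build, on one probability space, three random vectors: $X \sim d$, $\tilde X \sim \tilde d$, and $Y \sim d + \tilde d$. They should satisfy $Y \le X + \tilde X$ coordinatewise almost surely.

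If this works, monotonicity and subadditivity on $\Z_+^L$ (\Cref{lem:mon-subadd}) give, pointwise,
\[
\B(Y) \le \B(X + \tilde X) \le \B(X) + \B(\tilde X).
\]
Taking expectations then yields $\B(d + \tilde d) \le \B(d) + \B(\tilde d)$, which is the claim.

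Because every distribution involved is a product over coordinates, the coupling can be built independently in each coordinate $\ell$. Fix $\ell$ and write
\[
d_\ell = n + f, \qquad \tilde d_\ell = m + g, \qquad n, m \in \Z_+,\ f, g \in [0,1).
\]
Take $U_\ell \sim \mathrm{Bern}(f)$ and $V_\ell \sim \mathrm{Bern}(g)$ independent, and set $X_\ell = n + U_\ell$ and $\tilde X_\ell = m + V_\ell$. Then $(d + \tilde d)_\ell = n + m + f + g$, and there are two cases.

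\textbf{Case $f + g < 1$.} Here $Y_\ell$ must equal $n + m + W_\ell$ with $W_\ell \sim \mathrm{Bern}(f+g)$. We need $W_\ell \le U_\ell + V_\ell$ almost surely. Since $\Pr(U_\ell + V_\ell \ge 1) = f + g - fg$, the requirement is a Bernoulli$(f+g)$ variable lying below an integer variable that is $\ge 1$ with probability $f + g - fg$. This needs $f + g \le f + g - fg$, which fails whenever $fg > 0$. So the naive independent coupling is not enough, and we fix this by correlating $U_\ell$ and $V_\ell$.

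Correlation is allowed because only the marginal laws of $X$ and $\tilde X$ enter $\B(d)$ and $\B(\tilde d)$. The expectation of each is unchanged as long as $X$ has the right product law on its own, and likewise $\tilde X$. So within each coordinate we may couple $U_\ell$ and $V_\ell$ arbitrarily, keeping independence across coordinates within $X$ and within $\tilde X$ separately. Use the disjoint coupling: draw $\xi_\ell \sim \mathrm{Unif}[0,1]$ and set
\[
U_\ell = \mathbf 1\{\xi_\ell < f\}, \qquad V_\ell = \mathbf 1\{f \le \xi_\ell < f + g\}, \qquad W_\ell = U_\ell + V_\ell.
\]
Then $W_\ell \sim \mathrm{Bern}(f+g)$ exactly, with equality $Y_\ell = X_\ell + \tilde X_\ell$.

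\textbf{Case $f + g \ge 1$.} Here $Y_\ell = n + m + 1 + W_\ell$ with $W_\ell \sim \mathrm{Bern}(f + g - 1)$. Use the same uniform $\xi_\ell$ and set
\[
U_\ell = \mathbf 1\{\xi_\ell < f\}, \qquad V_\ell = \mathbf 1\{\xi_\ell \ge 1 - g\}.
\]
Then $U_\ell + V_\ell \ge 1$ always, and it equals $2$ exactly on $\{1 - g \le \xi_\ell < f\}$, which has probability $f + g - 1$. Define $W_\ell$ as the indicator of that event. Again $Y_\ell = X_\ell + \tilde X_\ell$ exactly.

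Taking the $\xi_\ell$ independent across $\ell$ gives the correct product laws for $X$, $\tilde X$ and $Y$. Since $Y = X + \tilde X$ almost surely, the chain of inequalities displayed above completes the proof.

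The main obstacle is seeing that the naive independent coupling fails and that marginal-preserving correlation is legitimate. Once that is recognized, the rest is routine verification.
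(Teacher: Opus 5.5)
Your proposal is correct and follows essentially the paper's proof. Both use the same per-coordinate uniform coupling: disjoint intervals $[0,f)$ and $[f,f+g)$ when the fractional parts sum below one, and the overlapping pair $[0,f)$, $[1-g,1]$ otherwise. Both then apply monotonicity and subadditivity of $\B$ on $\Z_+^L$ from Lemma~\ref{lem:mon-subadd}.

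There are two cosmetic differences. First, you observe that the coupling gives $Y = X + \tilde X$ exactly, so monotonicity is not actually needed. Second, you place the boundary case $f+g=1$ in the second case. This sidesteps a small slip in the paper's Case~1, where the stated identities $c_i = a_i+\tilde a_i$ and $\gamma_i = b_i+\tilde b_i$ fail when $b_i+\tilde b_i = 1$.
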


\begin{proof}
Fix $d,\tilde d \in \R_+^L$. For each $i \in [L]$, write
\[
d_i = a_i + b_i,
\quad
\tilde d_i = \tilde a_i + \tilde b_i,
\quad
a_i,\tilde a_i \in \Z_+,\ b_i,\tilde b_i \in [0,1),
\]
and
\[
d_i + \tilde d_i = c_i + \gamma_i,
\quad
c_i := \lfloor d_i + \tilde d_i \rfloor \in \Z_+,\ \gamma_i \in [0,1).
\]
Thus
\[
b_i + \tilde b_i = (c_i - a_i - \tilde a_i) + \gamma_i,
\quad
c_i - a_i - \tilde a_i \in \{0,1\}.
\]

We now construct, on a single probability space, random integer vectors
\[
X \sim d,
\quad
Y \sim \tilde d,
\quad
Z \sim d + \tilde d
\]
such that
\[
Z \;\le\; X + Y
\quad\text{coordinatewise.}
\]

\textbf{Per-coordinate coupling.}
Let $(U_i)_{i=1}^L$ be i.i.d.\ $\mathrm{Uniform}[0,1]$. For each $i$, we distinguish two cases.

\medskip\noindent
\emph{Case 1: $b_i + \tilde b_i \le 1$.}

Set
\[
B_i := \mathbf{1}\{U_i \le b_i\},
\qquad
C_i := \mathbf{1}\{b_i < U_i \le b_i + \tilde b_i\},
\qquad
D_i := \mathbf{1}\{U_i \le b_i + \tilde b_i\}.
\]
Then
\[
B_i \sim \mathrm{Bernoulli}(b_i),
\quad
C_i \sim \mathrm{Bernoulli}(\tilde b_i),
\quad
D_i \sim \mathrm{Bernoulli}(b_i + \tilde b_i) = \mathrm{Bernoulli}(\gamma_i),
\]
and $B_i,C_i,D_i$ are all functions of $U_i$.

Moreover, whenever $D_i = 1$ we have $U_i \le b_i + \tilde b_i$, hence either $U_i \le b_i$ (so $B_i=1$) or $b_i < U_i \le b_i + \tilde b_i$ (so $C_i = 1$). Thus
\[
D_i \;\le\; B_i + C_i.
\]

Here $c_i = a_i + \tilde a_i$ and $\gamma_i = b_i + \tilde b_i$.

\medskip\noindent
\emph{Case 2: $b_i + \tilde b_i > 1$.}

Then $c_i = a_i + \tilde a_i + 1$ and $\gamma_i = b_i + \tilde b_i - 1 \in (0,1)$.

Set
\[
B_i := \mathbf{1}\{U_i \le b_i\},
\qquad
C_i := \mathbf{1}\{U_i \ge 1 - \tilde b_i\},
\qquad
D_i := \mathbf{1}\{1 - \tilde b_i \le U_i \le b_i\}.
\]
Since $b_i + \tilde b_i > 1$, the interval $[1-\tilde b_i, b_i]$ has length $b_i + \tilde b_i - 1 = \gamma_i$, so
\[
B_i \sim \mathrm{Bernoulli}(b_i),
\quad
C_i \sim \mathrm{Bernoulli}(\tilde b_i),
\quad
D_i \sim \mathrm{Bernoulli}(\gamma_i).
\]

For any $U_i \in [0,1]$, we have $B_i + C_i \ge 1$ (the events $\{U_i \le b_i\}$ and $\{U_i \ge 1-\tilde b_i\}$ cover $[0,1]$). Moreover, if $D_i = 1$ then $U_i \in [1-\tilde b_i, b_i]$ and hence $B_i = C_i = 1$, so $B_i + C_i = 2$.

Thus in all cases
\[
B_i + C_i \;\ge\; 1 + D_i.
\]
Recall that here $c_i - a_i - \tilde a_i = 1$, so this is exactly
\[
c_i - a_i - \tilde a_i + D_i \;\le\; B_i + C_i.
\]

\medskip

Now define random integer variables for each $i$ by
\[
X_i := a_i + B_i,
\qquad
Y_i := \tilde a_i + C_i,
\qquad
Z_i := c_i + D_i.
\]

By construction and independence of $(U_i)$ across $i$, we have:

\begin{itemize}
    \item $X \sim d$, i.e.\ $X_i = \lfloor d_i \rfloor + \mathrm{Bernoulli}(d_i - \lfloor d_i \rfloor)$ independently.
    \item $Y \sim \tilde d$ in the same way.
    \item $Z \sim d + \tilde d$.
\end{itemize}

Furthermore, from the inequalities above we obtain, in both cases,
\[
c_i - a_i - \tilde a_i + D_i \;\le\; B_i + C_i,
\]
so
\[
Z_i
= c_i + D_i
\le a_i + \tilde a_i + B_i + C_i
= X_i + Y_i,
\]
for each $i$. Hence
\[
Z \;\le\; X + Y
\quad\text{coordinatewise.}
\]

\textbf{Putting it together.} By monotonicity of $\B$ on $\Z_+^L$ (Lemma~\ref{lem:mon-subadd}),
\[
\B(Z) \;\le\; \B(X+Y)
\quad\text{almost surely},
\]
and by subadditivity (Lemma~\ref{lem:mon-subadd}) again,
\[
\B(X+Y) \;\le\; \B(X) + \B(Y)
\quad\text{almost surely}.
\]
Therefore
\[
\B(Z) \;\le\; \B(X) + \B(Y)
\quad\text{almost surely}.
\]

Taking expectations and using the definition of $\B(\cdot)$ completes the proof.
\end{proof}

\begin{lemma}\label{lemma:equivalent-sampling}
    For all $d \in \R_+^L$, we have
    \begin{align*}
        \B\left( \frac{d}{2}\right) = \E_{X \sim d}\left[ \B\left( \frac{X}{2} \right) \right]\,.
    \end{align*}
\end{lemma}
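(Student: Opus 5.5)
The plan is to unfold both sides into expectations of the set function $\B$ on $\Z_+^L$ and show that the two random integer vectors involved have the same joint law. By the definition of the multilinear extension, $\B(d/2) = \E_{W' \sim d/2}[\B(W')]$, where $W'_i = \floor{d_i/2} + \mathrm{Bernoulli}(d_i/2 - \floor{d_i/2})$ independently across $i$. On the right-hand side, for each realization $X \sim d$ we have $\B(X/2) = \E_{W \sim X/2}[\B(W)]$. Hence
\[
\E_{X\sim d}\left[\B\left(\tfrac{X}{2}\right)\right] = \E[\B(W)],
\]
where $W$ is produced by a two-stage rounding: first draw $X \sim d$, then, conditionally on $X$, draw $W \sim X/2$ with fresh independent randomness. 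It therefore suffices to prove that $W$ and $W'$ have the same distribution on $\Z_+^L$.

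Independence across coordinates carries over. The coordinates $X_i$ are independent, and conditionally on $X$ the second-stage Bernoullis are independent and each depends only on $X_i$. So the pairs $(X_i, W_i)$ are independent across $i$, and $W$ has independent coordinates, just like $W'$. This reduces the claim to a one-dimensional statement: for each $i$, $W_i$ has the same law as $W'_i$.

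For the one-dimensional claim, write $n = \floor{d_i}$ and $m = \floor{d_i/2}$, and split on the parity of $n$.
\begin{itemize}
\item If $n = 2m$, then $X_i \in \{2m, 2m+1\}$. The value $2m$ gives $W_i = m$ deterministically, and $2m+1$ gives $W_i \in \{m, m+1\}$.
\item If $n = 2m+1$, which is consistent because $d_i \in [2m+1, 2m+2)$ forces $\floor{d_i/2} = m$, then $X_i \in \{2m+1, 2m+2\}$. The value $2m+1$ gives $W_i \in \{m, m+1\}$, and $2m+2$ gives $W_i = m+1$.
\end{itemize}
In both cases $W_i$ is supported on the two consecutive integers $\{m, m+1\}$, which is also the support of $W'_i$.

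A distribution on two consecutive integers is determined by its mean. By the tower property, $\E[W_i] = \E[\E[W_i \mid X_i]] = \E[X_i/2] = d_i/2$, using that the rounding $\floor{y} + \mathrm{Bernoulli}(y - \floor{y})$ has mean $y$. Since also $\E[W'_i] = d_i/2$, the two laws coincide, and this completes the argument.

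The only delicate step is verifying that the support is $\{m, m+1\}$ in both parity cases, including boundary situations such as $d_i$ an integer, where $X_i$ is deterministic. Once that is checked, the mean-matching argument and the independence bookkeeping are routine.
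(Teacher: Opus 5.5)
Your proposal is correct and follows essentially the same route as the paper. Both arguments realize $\E_{X\sim d}[\B(X/2)]$ as $\B$ evaluated at a two-stage rounding, use a parity split on $\lfloor d_i\rfloor$ to show each coordinate is supported on $\{\lfloor d_i/2\rfloor, \lfloor d_i/2\rfloor+1\}$, and conclude by matching means. You also spell out the coordinatewise independence of the two-stage rounding, which the paper's uniqueness claim uses only implicitly.
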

\begin{proof}
 Observe that it suffices to show that $Y \sim X/2$ for $X \sim d$ is supported on ${\{\floor{d/2}, \floor{d/2} + 1\}}$ because $\E[Y] = d/2$ and there is a unique coordinate-wise-independent distribution supported on ${\{\floor{d/2}, \floor{d/2} + 1\}}$ with that mean.

 Consider a particular coordindate $i \in [L]$. Then, $X \sim d$ satisfies $X_i \in \{\floor{d_i}, \floor{d_i} + 1\}$. Consider the following two cases:
 \begin{itemize}
     \item $\floor{d_i}$ is odd. In this case, $\floor{d_i/2} \leq \floor{d_i}/2 \leq (\floor{d_i} + 1)/2 = \floor{d_i/2} + 1$. Therefore, $X_i/2$ is supported on ${\{\floor{d_i/2}, \floor{d_i/2} + 1\}}$.

     \item $\floor{d_i}$ is even. In this case, $\floor{d_i/2} = \floor{d_i}/2 \leq (\floor{d_i} + 1)/2 \leq \floor{d_i/2} + 1$. Once again, $X_i/2$ is supported on ${\{\floor{d_i/2}, \floor{d_i/2} + 1\}}$. 
 \end{itemize}
 Thus, we have established that $X$ is supported on ${\{\floor{d/2}, \floor{d/2} + 1\}}$, as desired.
\end{proof}

We are now ready to prove Lemma~\ref{lemma:approx-convexity}.

\begin{proof}[Proof of Lemma~\ref{lemma:approx-convexity}]
    We begin by showing that $\B(\cdot)$ is $\gamma$-approximately midpoint convex on $\R_{+}^L$: for all $d, \tilde d \in \R_+^L$, we have
    \begin{align}\label{eq:inter-midpoint}
        \B\left( \frac{d + \tilde d}{2} \right) \leq \frac{\B(d) + \B(\tilde d)}{2}\ +\ 2 \cdot \gamma\,.
    \end{align}
    Now, we can leverage result from the theory of the stability of functional equations in several variables. In particular, Theorem~8.9 of \citet{hyers2012stability} in combination with \eqref{eq:inter-midpoint} implies that $\B(\cdot)$ is $4\gamma$-approximately convex on $\R_{++}^L$, i.e., for all $d, \tilde d \in \R_{++}^L$, we have
    \begin{align*}
        \B(\zeta \cdot d + (1 - \zeta) \cdot \tilde d) \leq \zeta \cdot \B(d) + (1 - \zeta) \cdot \B(\tilde d)\,.
    \end{align*}
    As $\B(\cdot)$ is a multi-linear extension and therefore continuous, we get that $\B(\cdot)$ is $4\gamma$-approximately convex on $\R_+^L$. This allows us to apply Theorem~8.3 (where $q_n \leq \log_2(L) + 2$) of \citet{hyers2012stability}, which yields the existence of the desired convex function $g$. Therefore, it suffices to prove \eqref{eq:inter-midpoint} to establish the lemma and we do so in the remainder.

    Consider $d, \tilde d \in \R_+^L$. 
    First, we apply Lemma~\ref{lemma:sub-additivity} to get
    \begin{align}\label{eq:inter-approx-convexity}
        \B\left( \frac{d + \tilde d}{2} \right) \leq \B\left( \frac{d}{2}  \right) + \B\left( \frac{\tilde d}{2} \right)\,.
    \end{align}
     Next, we apply Lemma~\ref{lemma:equivalent-sampling} to get
     \begin{align*}
         \B\left( \frac{d + \tilde d}{2} \right) \leq \E_{X \sim d}\left[\B\left( \frac{X}{2}  \right) \right] + \E_{\tilde X \sim \tilde d}\left[\B\left( \frac{\tilde X}{2}  \right) \right]\,.
     \end{align*}

    
    Finally, Assumption~\ref{assum:contract-subs} implies
     \begin{align*}
         \B\left( \frac{d + \tilde d}{2} \right) \leq \frac{\E_{X\sim d}[\B(X)] + \E_{\tilde X \sim \tilde d}[\B(\tilde X)]}{2} + 2 \cdot \gamma = \frac{\B(d) + \B(\tilde d)}{2} + 2 \cdot \gamma\,,
     \end{align*}
     thereby establishing \eqref{eq:inter-midpoint} and completing the proof.
\end{proof}

\subsection{Proof of Lemma~\ref{lemma:fluid-error}}

\begin{proof}
    Consider a feasible solution $(p,d)$ of $\F(q)$. First, observe that $(p, \min\{d, 1\})$ is also a feasible solution because $q \in [0,1]$. Therefore, without loss of generality, we can assume $d_\ell \in  [0,1]$ for all $\ell \in \L$. As a consequence, we get $\B(d) = \E_{X \sim d}[\B(X)] \leq B$.

    Next, note that $X \mapsto \B(X)$ has bounded differences: adding or removing one load does not change the required number of contracts by more than 1. Concretely, if $X$ and $\tilde X$ differ by at most one coordinate, then
    \begin{align*} 
        |\B(X) - \B(\tilde X)| \leq 1\,.
    \end{align*}
    Thus, we can apply the Efron-Stein inequality to get $\text{Var}_d(\B(X)) = L/2$. Combining this with Jensen's inequality yields:
    \begin{align}\label{eq:inter-fluid-error}
        \E_{d}[(\B(X) - B)^+] \leq \E_{d}[|\B(X) - \B(d)|] \leq \sqrt{\text{Var}(\B(X))} \leq \frac{\sqrt{L}}{2}\,.
    \end{align}
    
    Consider the following coupling of $X \sim d$ and $Y \sim q$: simulate an independent standard uniform random variable $U_\ell$ for each load $\ell \in \L$ and define
    \begin{align*}
        X_\ell = \mathds{1}(u_\ell \leq d_\ell);\quad Z_\ell = \mathds{1}(d_\ell < u_\ell \leq d_\ell + p_\ell);\quad Y_\ell = X_\ell + Z_\ell\,.
    \end{align*}

    We can now define our solution to the contract assignment problem $C(Y)$. Let $x^* \in \Z_+^{|\A|}$ be the contract assignment which satisfies $\sum_{A \in \A} x^*_A = \B(X)$ and $\sum_{A: \ell \in A} x^*_A \geq X_\ell$. Thus, there exist $x \leq x^*$ and $p \in \Z_+^L$ such that $\sum_{A \in \A} x_A = \min\{B, \B(X)\}$ and
    \begin{align*}
        z_\ell = Z_\ell + \begin{cases}
            1 &\text{if}\ \ell \in A\ \text{s.t.}\ x^*_A > x_A\\
            0 &\text{otherwise}\,.
        \end{cases}    
    \end{align*}
    As a consequence, we get $a^\top(z_\ell - Z_\ell) \leq \sum_{A \in A} (\sum_{\ell \in A} a_\ell) \cdot (x^*_a - x_A) \leq A_\max \cdot (\B(X) - B)^+$. Therefore, $(z,x)$ is a feasible solution of $C(Y)$ because $Y_\ell = X_\ell + Z_\ell$ and it has a objective value
    \begin{align*}
        \sum_{\ell \in \L} a_\ell \cdot z_\ell \leq a^\top Z + A_\max \cdot (\B(X) - B)^+\,.
    \end{align*}
    
    Since, we defined these feasible solutions $(z,x)$ for an arbitrary $Y \sim q$, we get
    \begin{align*}
        a^\top p\ +\ \E_{X \sim d}[A_\max \cdot(\B(X) - B)^+] = \E_{(X,Y,Z)}[a^\top Z + A_\max \cdot(\B(X) - B)^+] \geq \E_{Y \sim q}[C(Y)]\,.
    \end{align*}
    Finally, to conclude the proof, use the bound $\E_{X \sim d}[(\B(X) - B)^+] \leq \sqrt{L}/2$ from \eqref{eq:inter-fluid-error}.
\end{proof}

\subsection{Proof of Lemma~\ref{lemma:dual-problem}}

\begin{proof}
    First, observe that if $\lambda_\ell > a_\ell$ for any $\ell \in \L$, then $p_\ell \to \infty$ yields $\D(\lambda, \mu \mid q) = -\infty$. Therefore, we must have $\lambda_\ell \leq a_\ell$ for all $\ell \in \L$ for $\D(\lambda, \mu \mid q)$ to be finite. In this case, $\min_{p\geq 0} (a - \lambda)^\top p = 0$.
    
    Similarly, if $\sum_{\ell \in A} \lambda_\ell > \mu$ for some $A \in \A$, then we can set $d_\ell = n$ for $\ell \in A$ and zero other wise, and consider $n \to \infty$. In this case, we get $\B(d) \leq n$, and therefore $\lim_{n \to \infty} \mu \cdot \B(d) - \lambda^\top d = \lim_{n \to \infty} (\mu - \sum_{\ell \in A} \lambda_\ell) \cdot n = -\infty$. Therefore, we must have $\sum_{\ell \in A} \lambda_\ell \leq \mu$ for all $A \in \A$ in order to ensure that $\D(\lambda, \mu \mid q)$ is finite. In this case, we have
    \begin{align*}
        \min_{d \geq 0}\ \mu \cdot \B(d) - \lambda^\top d = \min_{d \geq 0}\ \E_{X \sim d}[\mu \cdot \B(X) - \lambda^\top X] = \min_{X \in \Z_+^L} \mu \cdot \B(X) - \lambda^\top X \geq 0\,,
    \end{align*}
    where the last inequality follows by assigning $X$ to $\B(X)$ contracts and noting that $\sum_{\ell \in A}\lambda_\ell \leq \mu$ for each contract. Plugging these observations into \eqref{eq:dual_def} completes the proof.
\end{proof}

\subsection{Proof of \Cref{thm:runtime}}
\begin{proof}[Proof of \Cref{thm:runtime}]
    Observe that Algorithm~\ref{alg:dfw} implments Frank Wolfe iterates on the dual concave maximization problem described in \eqref{eq:combined-dual-problem}. We endow the space of dual variables $(\lambda, \mu)$ with the $\ell_\infty$ norm. Therefore, the diameter of the feasibility set is given by $D = A_\max$. Moreover, the gradients are measured in the $\ell_1$ (dual of $\ell_\infty$) norm, and thus Lemma~\ref{lemma:smoothness} implies that $\lambda \mapsto \sum_{\ell \in \L} r_\ell(\lambda_\ell)$ is $(\max_\ell \beta_\ell)$-smooth. The runtime bound follows from Theorem~2.2 of \citet{braun2022conditional}.
\end{proof}

\subsection{Proof of \Cref{thm:performance-guarantee}}
\begin{proof}[Proof of \Cref{thm:performance-guarantee}]
    
    Let $(\lambda, \mu)$ be the iterates with which Algorithm~\ref{alg:dfw} terminates. The proof is split into multiple steps: we first upper bound the cost of $\alg$, then lower bound the cost of $\opt$, and finally put together the two bounds. To simplify notation, we set $q_\ell = q_\ell(\lambda_\ell)$ for all $\ell \in \L$.

    \textbf{Step 1 (Upper bound on algorithm's performance). } Applying Lemma~\ref{lemma:fluid-error} yields
    \begin{align*}
        \alg\ \leq\ \sum_{\ell \in \L} s_\ell(\lambda_\ell)\ +\ \F(q)\ +\ A_\max \cdot \frac{\sqrt{L}}{2}\,.
    \end{align*}
    
    Next, we show that approximate strong duality holds and bound $\F(q)$ using the dual optimization problem derived in Lemma~\ref{lemma:dual-problem}. To do so, we leverage Lemma~\ref{lemma:approx-convexity} to get
    \begin{align*}
        \F(q) \quad \leq \quad \min_{p,d} \quad &a^\top p\\
    \text{s.t.} \quad &p + d \geq q\\
    & g(d) \leq B - 2 \gamma \log_2(4L)\\
    &d,\ p\ \in \R_+^L 
    \end{align*}
    where we have used the fact that $\B(d) \leq g(d) + 4 \cdot \gamma$ for all $d \in \R_+^L$. As $g$ is convex, the RHS optimization problem satisfies strong duality due to Slater's condition (pick a $d$ small enough and $p$ small enough to get an strictly interior point). Therefore, we get
    \begin{align*}
       \F(q) \quad &\leq \quad \max_{\tilde \lambda, \tilde \mu \geq 0}\ \min_{p,d \geq 0}\ (a - \tilde\lambda)^\top p + \left(\tilde  \mu \cdot g(d) - \tilde \lambda^\top d \right) - \tilde \mu \cdot (B - 2  \gamma \log_2(4L)) + \tilde \lambda^\top q\\
       &\leq \quad \max_{\tilde \lambda, \tilde \mu \geq 0}\  \min_{p,d \geq 0}\ \ (a - \tilde\lambda)^\top p +\ \left(\tilde  \mu \cdot \B(d) - \tilde \lambda^\top d \right) - \tilde \mu \cdot (B - 2 \gamma \log_2(4L)) + \tilde \lambda^\top q\\
       &\leq \quad \max_{\tilde \lambda, \tilde \mu \geq 0}\ \tilde \lambda^\top q - \tilde \mu \cdot (B - 2 \gamma \log_2(4L)) \quad \quad \text{s.t.}\quad \left\{\sum_{\ell \in A} \tilde \lambda_\ell \leq \tilde \mu\quad \forall\ A \in \A \,;\ \tilde \lambda\leq a \right\}\\
       &\leq \quad 2 \gamma \log_2(4L) \cdot A_\max\ +\  \max_{\tilde \lambda, \tilde \mu \geq 0}\ \tilde \lambda^\top q - \tilde \mu \cdot B \quad \quad \text{s.t.}\quad \left\{\sum_{\ell \in A} \tilde \lambda_\ell \leq \tilde \mu \quad \forall\ A \in \A\,;\ \tilde \lambda\leq a \right\}\,,
    \end{align*}
    where the second-last inequality follows from Lemma~\ref{lemma:dual-problem} and the last inequality follows from the optimal solution satisfying $\tilde \mu = \max_{A \in A}\ \sum_{\ell \in A} \tilde \lambda_\ell \leq A_\max$.
    
    Now, we use our termination criterion to ensure that Frank-Wolfe gap is smaller than $\epsilon$, i.e.,
    \begin{align*}
        \max_{\tilde \lambda, \tilde \mu \geq 0}\ \tilde \lambda^\top q - \tilde \mu \cdot B \quad \quad \text{s.t.}\quad \left\{\sum_{\ell \in A} \tilde \lambda_\ell \leq \tilde \mu \quad \forall\ A \in \A\,;\ \lambda\leq a \right\}\quad \leq \quad \lambda^\top q\ -\ \mu \cdot B\ +\ \epsilon
    \end{align*}
    Thus, we get
    \begin{align*}
        \F(q)\quad \leq \quad 2 \gamma  \log_2(4L) \cdot A_\max\ +\ \epsilon\ +\ \lambda^\top q - \mu \cdot B\,.
    \end{align*}
    Combining everything yields the following upper bound on the cost of the algorithm:
    \begin{align}
        \alg\ \leq\ \epsilon\ +\ A_\max \cdot \left(\frac{\sqrt{L}}{2} + 2 \cdot \gamma \cdot \log_2(4L)\right)\ +\ \sum_{\ell \in \L} s_\ell(q_\ell)\ +\ \lambda^\top q - \mu \cdot B \, \tag{$\spadesuit$}.
    \end{align}

    \textbf{Step 2 (Lower bound on optimal Cost).} Recall that
    \begin{align*}
        \opt\ =\ \min_{P_\ell \in \pp_\ell} \quad \sum_{\ell \in \L}\ S_\ell(P_\ell)\ +\ \E_{X_i \sim Q_i(P_i)}[C(X)]\,.
    \end{align*}

    Note that $C(X)$ is an LP and weak duality holds:
    \begin{align*}
        C(X)\ &\geq\ \min_{z,x \geq 0}\ (a - \lambda)^\top z + \lambda^\top X - \mu \cdot B + \sum_{A \in A} \left(\mu - \sum_{\ell \in A} \lambda_\ell \right) \cdot x_A\\
        &\geq \ \lambda^\top X - \mu\cdot B\,,
    \end{align*}
    where the last inequality follows from the feasibility of the solution $(\lambda, \mu)$ with which Algorithm~\ref{alg:dfw} terminates, i.e., $\lambda \leq a$ and $\sum_{\ell \in A} \lambda_\ell \leq \mu$. Therefore, we get
    \begin{align}
        \opt\ &\geq\ \min_{P_\ell \in \pp_\ell} \quad \sum_{\ell \in \L}\ S_\ell(P_\ell)\ +\ \E_{X_\ell \sim Q_\ell(P_\ell)}[\lambda^\top  X - \mu \cdot B] \notag \\
        &=\ \sum_{\ell \in \L}\ \min_{P_\ell \in \pp_\ell}\ S_\ell(P_\ell) + Q_\ell(P_\ell) \cdot \lambda_\ell - \mu \cdot B \notag\\
        &=\ \sum_{\ell \in \L} s_\ell(\lambda_\ell) + q_\ell \cdot \lambda_\ell - \mu \cdot B \tag{$\clubsuit$}\,,
    \end{align}
    where the last equality follows from the definition of $s_\ell(\lambda_\ell)$ and $q_\ell$.

    \textbf{Step 3 (Putting everything together).} Finally, we can combine $(\spadesuit)$ and $(\clubsuit)$ to get the desired bound:
    \begin{align*}
        \alg\ \leq\ \epsilon\ +\ A_\max \cdot \left(\frac{\sqrt{L}}{2} + 2 \cdot \gamma \cdot \log_2(4L)\right)\ +\ \opt\,.  \tag*{\qedhere}
    \end{align*}
\end{proof}\begin{proof}[Proof of \Cref{thm:performance-guarantee}]
    
    Let $(\lambda, \mu)$ be the iterates with which Algorithm~\ref{alg:dfw} terminates. The proof is split into multiple steps: we first upper bound the cost of $\alg$, then lower bound the cost of $\opt$, and finally put together the two bounds. To simplify notation, we set $q_\ell = q_\ell(\lambda_\ell)$ for all $\ell \in \L$.

    \textbf{Step 1 (Upper bound on algorithm's performance). } Applying Lemma~\ref{lemma:fluid-error} yields
    \begin{align*}
        \alg\ \leq\ \sum_{\ell \in \L} s_\ell(\lambda_\ell)\ +\ \F(q)\ +\ A_\max \cdot \frac{\sqrt{L}}{2}\,.
    \end{align*}
    
    Next, we show that approximate strong duality holds and bound $\F(q)$ using the dual optimization problem derived in Lemma~\ref{lemma:dual-problem}. To do so, we leverage Lemma~\ref{lemma:approx-convexity} to get
    \begin{align*}
        \F(q) \quad \leq \quad \min_{p,d} \quad &a^\top p\\
    \text{s.t.} \quad &p + d \geq q\\
    & g(d) \leq B - 2 \gamma \log_2(4L)\\
    &d,\ p\ \in \R_+^L 
    \end{align*}
    where we have used the fact that $\B(d) \leq g(d) + 4 \cdot \gamma$ for all $d \in \R_+^L$. As $g$ is convex, the RHS optimization problem satisfies strong duality due to Slater's condition (pick a $d$ small enough and $p$ small enough to get an strictly interior point). Therefore, we get
    \begin{align*}
       \F(q) \quad &\leq \quad \max_{\tilde \lambda, \tilde \mu \geq 0}\ \min_{p,d \geq 0}\ (a - \tilde\lambda)^\top p + \left(\tilde  \mu \cdot g(d) - \tilde \lambda^\top d \right) - \tilde \mu \cdot (B - 2  \gamma \log_2(4L)) + \tilde \lambda^\top q\\
       &\leq \quad \max_{\tilde \lambda, \tilde \mu \geq 0}\  \min_{p,d \geq 0}\ \ (a - \tilde\lambda)^\top p +\ \left(\tilde  \mu \cdot \B(d) - \tilde \lambda^\top d \right) - \tilde \mu \cdot (B - 2 \gamma \log_2(4L)) + \tilde \lambda^\top q\\
       &\leq \quad \max_{\tilde \lambda, \tilde \mu \geq 0}\ \tilde \lambda^\top q - \tilde \mu \cdot (B - 2 \gamma \log_2(4L)) \quad \quad \text{s.t.}\quad \left\{\sum_{\ell \in A} \tilde \lambda_\ell \leq \tilde \mu\quad \forall\ A \in \A \,;\ \tilde \lambda\leq a \right\}\\
       &\leq \quad 2 \gamma \log_2(4L) \cdot A_\max\ +\  \max_{\tilde \lambda, \tilde \mu \geq 0}\ \tilde \lambda^\top q - \tilde \mu \cdot B \quad \quad \text{s.t.}\quad \left\{\sum_{\ell \in A} \tilde \lambda_\ell \leq \tilde \mu \quad \forall\ A \in \A\,;\ \tilde \lambda\leq a \right\}\,,
    \end{align*}
    where the second-last inequality follows from Lemma~\ref{lemma:dual-problem} and the last inequality follows from the optimal solution satisfying $\tilde \mu = \max_{A \in A}\ \sum_{\ell \in A} \tilde \lambda_\ell \leq A_\max$.
    
    Now, we use our termination criterion to ensure that Frank-Wolfe gap is smaller than $\epsilon$, i.e.,
    \begin{align*}
        \max_{\tilde \lambda, \tilde \mu \geq 0}\ \tilde \lambda^\top q - \tilde \mu \cdot B \quad \quad \text{s.t.}\quad \left\{\sum_{\ell \in A} \tilde \lambda_\ell \leq \tilde \mu \quad \forall\ A \in \A\,;\ \lambda\leq a \right\}\quad \leq \quad \lambda^\top q\ -\ \mu \cdot B\ +\ \epsilon
    \end{align*}
    Thus, we get
    \begin{align*}
        \F(q)\quad \leq \quad 2 \gamma  \log_2(4L) \cdot A_\max\ +\ \epsilon\ +\ \lambda^\top q - \mu \cdot B\,.
    \end{align*}
    Combining everything yields the following upper bound on the cost of the algorithm:
    \begin{align}
        \alg\ \leq\ \epsilon\ +\ A_\max \cdot \left(\frac{\sqrt{L}}{2} + 2 \cdot \gamma \cdot \log_2(4L)\right)\ +\ \sum_{\ell \in \L} s_\ell(q_\ell)\ +\ \lambda^\top q - \mu \cdot B \, \tag{$\spadesuit$}.
    \end{align}

    \textbf{Step 2 (Lower bound on optimal Cost).} Recall that
    \begin{align*}
        \opt\ =\ \min_{P_\ell \in \pp_\ell} \quad \sum_{\ell \in \L}\ S_\ell(P_\ell)\ +\ \E_{X_i \sim Q_i(P_i)}[C(X)]\,.
    \end{align*}

    Note that $C(X)$ is an LP and weak duality holds:
    \begin{align*}
        C(X)\ &\geq\ \min_{z,x \geq 0}\ (a - \lambda)^\top z + \lambda^\top X - \mu \cdot B + \sum_{A \in A} \left(\mu - \sum_{\ell \in A} \lambda_\ell \right) \cdot x_A\\
        &\geq \ \lambda^\top X - \mu\cdot B\,,
    \end{align*}
    where the last inequality follows from the feasibility of the solution $(\lambda, \mu)$ with which Algorithm~\ref{alg:dfw} terminates, i.e., $\lambda \leq a$ and $\sum_{\ell \in A} \lambda_\ell \leq \mu$. Therefore, we get
    \begin{align}
        \opt\ &\geq\ \min_{P_\ell \in \pp_\ell} \quad \sum_{\ell \in \L}\ S_\ell(P_\ell)\ +\ \E_{X_\ell \sim Q_\ell(P_\ell)}[\lambda^\top  X - \mu \cdot B] \notag \\
        &=\ \sum_{\ell \in \L}\ \min_{P_\ell \in \pp_\ell}\ S_\ell(P_\ell) + Q_\ell(P_\ell) \cdot \lambda_\ell - \mu \cdot B \notag\\
        &=\ \sum_{\ell \in \L} s_\ell(\lambda_\ell) + q_\ell \cdot \lambda_\ell - \mu \cdot B \tag{$\clubsuit$}\,,
    \end{align}
    where the last equality follows from the definition of $s_\ell(\lambda_\ell)$ and $q_\ell$.

    \textbf{Step 3 (Putting everything together).} Finally, we can combine $(\spadesuit)$ and $(\clubsuit)$ to get the desired bound:
    \begin{align*}
        \alg\ \leq\ \epsilon\ +\ A_\max \cdot \left(\frac{\sqrt{L}}{2} + 2 \cdot \gamma \cdot \log_2(4L)\right)\ +\ \opt\,.  \tag*{\qedhere}
    \end{align*}
\end{proof}

\subsection{Proof of \Cref{prop:contract_utilization}}

\begin{proof}[Proof of \Cref{prop:contract_utilization}]
    Let $(\lambda, \mu)$ be the iterates returned by \Cref{alg:dfw}, and set $q_\ell \coloneqq q_\ell(\lambda_\ell)$.
    Note that $(0,0)$ is a feasible solution to the dual LP in \Cref{alg:dfw}. Therefore, the termination condition of the \textbf{While} loop implies
    \begin{align*}
        \lambda^\top q - \mu \cdot B \geq - \epsilon\,.
    \end{align*}
    Moreover, since $(\lambda, \mu)$ satisfies $\sum_{\ell \in A} \lambda_\ell \leq \mu$ for all $A \in \A$, we get
    \begin{align*}
        \lambda^\top q = \E_{X \sim q}[\lambda^\top X] \leq \E_{X\sim q}[\mu \cdot \B(X)] = \mu \cdot \B(q)\,.
    \end{align*}

    If $\mu = 0$, then dual feasibility implies $\lambda = 0$, and hence $q(\lambda)=q(0)=\mathbf 1$. In particular, $\B(q(\lambda)) \ge B$, and the claim holds.
    We therefore assume $\mu > 0$ below. Combining the two inequalities yields
    \begin{align}\label{eq:inter-contract-use}
        \B(q) \geq B - \frac{\epsilon}{\mu}\,.
    \end{align}

    Next, note that downward-closedness of $\A$ implies $\{\ell\} \in \A$ for all $\ell$, and therefore $\lambda_\ell \leq \mu$ for all $\ell \in \L$. Using \Cref{assum:lipschitz} (with $\beta = \max_\ell \beta_\ell$) and $q(0)=1$, we have $q_\ell(\lambda_\ell) \ge 1 - \beta_\ell \lambda_\ell$, and hence
    \begin{align*}
        \B(q) \geq \frac{\sum_{\ell \in \L} q_\ell}{u}
        \geq \frac{\sum_{\ell \in \L} (1 - \beta_\ell \lambda_\ell)}{u}
        \geq \frac{L \cdot (1 - \beta \cdot \mu)}{u}
        \geq B \cdot \frac{1 - \beta \cdot \mu}{1 - \nu}\,,
    \end{align*}
    where the first inequality uses $\B(X) \ge |X|/u$ for all $X$, and the last inequality uses $uB < (1-\nu)L$.

    If $\mu \le \nu/\beta$, then $(1-\beta\mu)/(1-\nu) \ge 1$ and hence $\B(q)\ge B$, so the claim holds.
    Otherwise $\mu > \nu/\beta$, and plugging this into \eqref{eq:inter-contract-use} yields
    $\B(q) \ge B - \epsilon/\mu \ge B - \epsilon \cdot (\beta/\nu)$,
    establishing the proposition.
\end{proof}

\subsection{Proof of \Cref{cor:contract_utilization_hp}}

\begin{proof}[Proof of \Cref{cor:contract_utilization_hp}]
    For $Z \in \{0,1\}^L$, changing a single coordinate of $Z$ (adding or removing one load) changes $\B(Z)$ by at most $1$ because $\A$ is downward closed and thus singleton sets are feasible. Therefore, $\B(\cdot)$ satisfies bounded differences with constants $1$, and McDiarmid's inequality gives
    \[
        \B(X) \ge \B(q) - \sqrt{\frac{L \ln(1/\delta)}{2}}
    \]
    with probability at least $1-\delta$.
    Combining with \Cref{prop:contract_utilization} yields
    \[
        \B(X) \ge B - \epsilon \cdot (\beta/\nu) - \sqrt{\frac{L \ln(1/\delta)}{2}}
    \]
    with probability at least $1-\delta$.
    Finally, since $U(X)=\min\{\B(X),B\}$, we have $U(X)\ge B - \epsilon(\beta/\nu) - \sqrt{\frac{L \ln(1/\delta)}{2}}$ on the same event.
\end{proof}

\section{Missing Proofs from \Cref{sec:numerics}}

\begin{proof}[Proof of \Cref{prop:gamma-interval-graph}]
For each color $k\in[K]$, define the per-color contract usage function
\[
\B_k(X)\ \coloneqq\ \max_{t\in\{1,\dots,T\}}\ \sum_{\ell:\,c(\ell)=k,\ t\in I_\ell} X_\ell,
\qquad\text{so that}\qquad
\B(X)=\sum_{k=1}^K \B_k(X).
\]
Fix $X\in\mathbb Z_+^{|\L|}$ and consider $d=X/2$. By the rounding definition,
let
\[
Y \ \coloneqq\ \lfloor X/2\rfloor + Z,
\]
where the $Z_\ell$ are independent and satisfy $Z_\ell\sim\mathrm{Bernoulli}(1/2)$ if $X_\ell$ is odd
(and $Z_\ell\equiv 0$ if $X_\ell$ is even). Then
\[
\B(X/2)=\mathbb E[\B(Y)] = \sum_{k=1}^K \mathbb E[\B_k(Y)].
\]

Write, for each load $\ell$,
\[
Y_\ell \ =\ \frac{X_\ell}{2}+\xi_\ell,
\]
where $\xi_\ell=0$ if $X_\ell$ is even, and if $X_\ell$ is odd then
\[
\xi_\ell = Z_\ell-\frac12 \in \left\{-\frac12,+\frac12\right\},
\qquad \mathbb E[\xi_\ell]=0,
\]
with the $\{\xi_\ell\}$ independent.

Fix a color $k$ and define for each time $t$,
\[
S_{k,t}\ \coloneqq\ \sum_{\ell:\,c(\ell)=k,\ t\in I_\ell} Y_\ell
\;=\;
\underbrace{\frac12\sum_{\ell:\,c(\ell)=k,\ t\in I_\ell} X_\ell}_{\mu_{k,t}}
\;+\;
\underbrace{\sum_{\ell:\,c(\ell)=k,\ t\in I_\ell} \xi_\ell}_{D_{k,t}}.
\]
Then
\[
\mathbb E[\B_k(Y)]
=\mathbb E\big[\max_t S_{k,t}\big]
\le \max_t \mu_{k,t} + \mathbb E\big[\max_t D_{k,t}\big]
= \frac{\B_k(X)}{2} + \mathbb E\big[\max_t D_{k,t}\big].
\]
Summing over $k$ yields
\[
\B(X/2)\ \le\ \frac{\B(X)}{2} \;+\; \sum_{k=1}^K \mathbb E\big[\max_t D_{k,t}\big].
\]

It remains to bound $\mathbb E[\max_{t} D_{k,t}]$. For color $k \in [K]$, define
\[
M_k(X)\ \coloneqq\ \max_{t\in\{1,\dots,T\}}\ \sum_{\ell:\,c(\ell)=k,\ t\in I_\ell} \mathbf 1\{X_\ell\ \text{is odd}\}.
\]
For fixed $t$, the deviation term
\[
D_{k,t}\ =\ \sum_{\ell:\,c(\ell)=k,\ t\in I_\ell}\xi_\ell
\]
is a sum of at most $M_k(X)$
independent, mean-zero random variables supported on $\{-1/2,+1/2\}$.
Hence, $D_{k,t}$ is sub-Gaussian with variance proxy $\sigma_{k,t}^2 \leq M_k(X)/4$
(by Hoeffding's lemma). Therefore, by the standard bound on the expected maximum of $T$
mean-zero $\sigma$-sub-Gaussian random variables (see, e.g., \citealt{vershynin2020high},
Sec.~2.7.3),
\[
\mathbb E\!\left[\max_{t\in\{1,\dots,T\}} D_{k,t}\right]
\ \le\ \sqrt{2\log T}\, \cdot \max_t \sigma_{k,t}
\ \le\ \sqrt{2\log T}\cdot \sqrt{\frac{M_k(X)}{4}}
\ =\ \sqrt{\frac{M_k(X)\log T}{2}}.
\]
Consequently,
\[
\B(X/2)\ \le\ \frac{\B(X)}{2}\;+\;\sqrt{\frac{\log T}{2}}\sum_{k=1}^K \sqrt{M_k(X)}.
\]

Finally, let $L_k\coloneqq |\{\ell\in\L:\ c(\ell)=k\}|$ be the number of loads of color $k$.
Since each term $(X_\ell\bmod 2)\le 1$, for every $t$,
\[
\sum_{\ell:\,c(\ell)=k,\ t\in I_\ell} (X_\ell\bmod 2)\ \le\ \sum_{\ell:\,c(\ell)=k} 1 \ =\ L_k,
\]
hence $M_k(X)\le L_k$. Using Cauchy--Schwarz and $\sum_{k=1}^K L_k = |\L| = L$,
\[
\sum_{k=1}^K \sqrt{M_k(X)}
\le \sum_{k=1}^K \sqrt{L_k}
\le \sqrt{K\sum_{k=1}^K L_k}
= \sqrt{KL}.
\]
Plugging in,
\[
\B(X/2)\ \le\ \frac{\B(X)}{2}\;+\;\sqrt{\frac{KL\log T}{2}}
\ \le\ \frac{\B(X)}{2}\;+\;\sqrt{KL\log T}.
\]
Thus, \Cref{assum:contract-subs} is satisfied with $\gamma=\sqrt{L\cdot K\log(T)/2}$.
\end{proof}

\end{document}